\documentclass[draftclsnofoot,peerreview,english]{IEEEtran}
\usepackage[T1]{fontenc}
\usepackage[latin9]{inputenc}
\usepackage{amsmath}
\usepackage{amssymb}
\usepackage{mathrsfs}
\usepackage{bbm}
\usepackage{mathtools}
\usepackage{empheq}
\usepackage{graphicx}
\usepackage{amsthm}	
\usepackage{hyperref}
\usepackage{cite}

\makeatletter
\makeatother

\usepackage{babel}

\newtheorem{thm}{Theorem}				
\newtheorem{prop}[thm]{Proposition}		
\newtheorem{lem}[thm]{Lemma}			
\newtheorem{cor}[thm]{Corollary}

\DeclareMathOperator{\pr}{\mathrm{Pr}} 		

\begin{document}

\title{Can Negligible Cooperation\\ Increase Network Reliability?}

\author{Parham Noorzad, \IEEEmembership{Student Member, IEEE,} 
Michelle Effros, \IEEEmembership{Fellow, IEEE,}\\ and Michael
Langberg, \IEEEmembership{Senior Member, IEEE}%
\thanks{This paper was presented in part at the 2016 IEEE International 
Symposium of Information Theory in Barcelona, Spain.}
\thanks{This material is based upon work supported by the 
National Science Foundation under Grant Numbers 15727524,
1526771, and 1321129. }%
\thanks{P. Noorzad and M. Effros are with the California
Institute of Technology, Pasadena, CA 91125 USA 
(emails: parham@caltech.edu, effros@caltech.edu). }
\thanks{M. Langberg is with the State
University of New York at Buffalo, Buffalo, NY 14260 USA
(email: mikel@buffalo.edu).}}

\maketitle

\begin{abstract} 
In network cooperation strategies, nodes work together with the aim of 
increasing transmission rates or reliability. 
This paper demonstrates that enabling cooperation 
between the transmitters of a two-user multiple access channel, via a cooperation 
facilitator that has access to both messages, 
always results in a network whose maximal- and average-error 
sum-capacities are the same---even when those capacities differ in the 
absence of cooperation and the information shared with the encoders is negligible. 
From this result, it follows that if a multiple access channel with 
no transmitter cooperation has different maximal- and average-error sum-capacities, 
then the maximal-error sum-capacity of the network consisting of this channel
and a cooperation facilitator is not continuous with respect to the 
output edge capacities of the facilitator. This shows that there exist networks where 
sharing even a negligible number of bits per channel use with the encoders
yields a non-negligible benefit. 
\end{abstract}

\begin{IEEEkeywords}
Continuity, cooperation, edge removal, maximal-error capacity region, 
multiple access channel, negligible capacity,
network information theory, reliability. 
\end{IEEEkeywords}

\section{Introduction} \label{sec:intro}

In his seminal work \cite{Shannon}, Shannon defines capacity as the maximal rate
achievable with arbitrarily small {\em maximal error probability}.  For
the point-to-point channel studied in that work, the rates achievable with
arbitrarily small {\em average error probability} turn out to be the same.
Since average error probability is often easier to work with but maximal
error probility provides a more stringent constraint that for some
channels yields a different capacity region~\cite{Dueck}, both maximal and
average error probability persist in the literature.

For multiterminal channels, the benefit of codes with small maximal error probability 
may come at the cost of lower rates. For example, in the MAC, the sum rates achievable
under maximal and average error probability constraints can differ \cite{Dueck}.
While such differences cannot arise in the broadcast channel \cite{WillemsBC},
the MAC with full encoder cooperation\footnote{In a 2-user MAC with full encoder cooperation, both 
encoders have access to both messages.}, or other scenarios that actually
or effectively employ only a single encoder, the importance of networks with
multiple encoders and infeasibility of full cooperation in many scenarios, together
motivate our interest in quantifying the reliability benefit of \emph{rate-limited}
cooperation.  

To make this discussion concrete, consider a network consisting of a 
multiple access channel (MAC) and a cooperation facilitator (CF)
\cite{NoorzadEtAl,Noorzad2}, as shown in Figure \ref{fig:network}. The CF is a node
that sends and receives limited information to and from each encoder. 
Prior to transmitting its codeword over the channel, 
each encoder sends some information to the CF. The CF then replies
to each encoder over its output links. This communication may continue for a finite 
number of rounds. The total number of bits transmitted on each CF input or output link is bounded by 
the product of the blocklength, $n$, and the capacity of that link. Once the encoders'
communication with the CF is over, each encoder transmits its codeword 
over $n$ channel uses. 

In order to quantify the benefit of rate-limited cooperation in the above network, 
we define a spectrum of error probabilities that range from average error to maximal error.
Theorem \ref{thm:main}, decribed in Subsection \ref{subsec:coopRel}, 
states that if for $i\in\{1,2\}$, we increase $C^i_\mathrm{in}$ (the capacity
of the link from encoder $i$ to the CF) by a value proportional to 
the desired increase in reliability, and $C^i_\mathrm{out}$ (the capacity
of the link from the CF to encoder $i$) by any arbitrarily small amount, then
any rate pair that is achievable in the original network under average 
error is achievable in the new network under a stricter notion of error. This
result quantifies the relationship between cooperation under the CF model and reliability.
For the proof, we use techniques from \cite{WillemsBC}, in which Willems shows
that the average- and maximal-error capacity regions of the discrete
memoryless broadcast channel are identical. A similar result, quantifying the 
reliability benefit of cooperation under the conferencing encoders model \cite{WillemsMAC},
appears in Subsection \ref{subsec:confRel}. 
\begin{figure} 
	\begin{center}
		\includegraphics[scale=0.25]{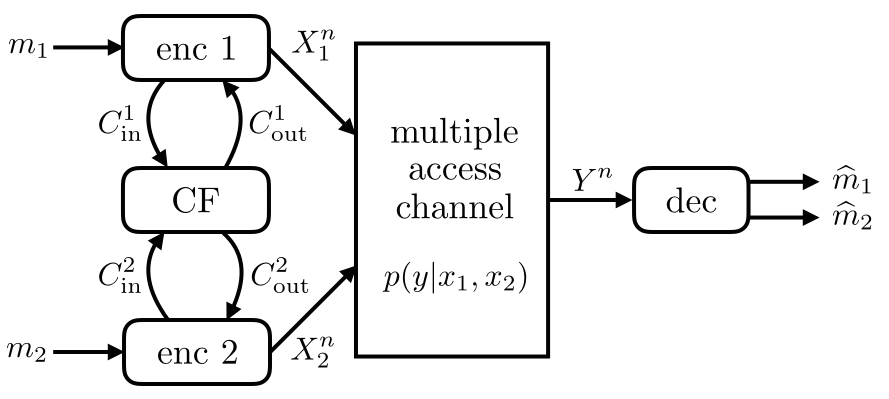}
		\caption{A network consisting of a memoryless MAC
		and a CF.} \label{fig:network}
	\end{center}
\end{figure} 

Our main result, Theorem \ref{thm:MaxEqAvg}, considers the case
where $C_\mathrm{in}^1$ and $C_\mathrm{in}^2$ 
are sufficiently large so that the CF has access to both source messages. 
In such a network, whenever $C_\mathrm{out}^1$ and $C_\mathrm{out}^2$ 
are positive, the maximal- and average-error capacity regions are equal.
Thus, unlike the classical MAC scenario, where
codes with small maximal error achieve lower rates than codes with small average error, 
when the encoders cooperate through a CF that has full access to the messages
and outgoing links of arbitrarily small capacity, any rate pair that is achievable with 
small average error is also achievable with small maximal error. Therefore,
cooperation removes the tradeoff that exists between transmission rates and reliability
in the classical MAC. Thus in our model of rate-limited cooperation, it is 
possible to obtain the rate and reliability benefits of cooperation 
at the same time. 

Applying the equality between maximal- and average-error capacity regions in
the scenario described above to Dueck's ``Contraction MAC,'' a MAC with
maximal-error capacity region strictly smaller than its
average-error region \cite{Dueck}, yields a network whose
maximal-error sum-capacity is not continuous with respect to the capacities 
of its edges (Proposition \ref{prop:discontinuity}). 
The discontinuity in sum-capacity observed here is related to
the edge removal problem \cite{HoEtAl, JalaliEtAl}, which we next discuss. 

The edge removal problem studies the change in the capacity region of a network that
results from removing a point-to-point channel of finite capacity, here called an ``edge,'' 
from the network. One instance of this problem considers removed 
edges of ``negligible capacity.'' Intuitively, an
edge has negligible capacity if for all functions $f(n)=o(n)$ 
and all sufficiently large $n$, it can carry $f(n)$ bits 
noiselessly over the channel in $n$ channel uses. In this context, the edge removal 
problem asks whether removing an edge with negligible capacity from a network has any
effect on the capacity region of that network. Our result showing the existence of
a network with a discontinuous maximal-error sum-capacity demonstrates the existence of a 
network where removing an edge with negligible capacity has a non-negligible effect on 
its maximal-error capacity region (Subsection \ref{subsec:negligible}).

The edge removal problem for edges with negligible capacity has been studied previously.
In the context of lossless source coding over networks, 
Gu, Effros, and Bakshi \cite{GuEffrosBakshi} state the ``Vanishment Conjecture,'' which
roughly says that in a class of network source coding problems,
certain edges with negligible capacity can be removed with
no consequences. In \cite{GuEffros} and \cite[p. 51]{Gu}, the authors study the relation between
the edge removal problem for edges with negligible capacity and a notion of strong converse. 
In \cite{SarwateGastpar}, Sarwate and Gastpar show that feedback
via links of negligible capacity does not affect the average-error 
capacity region of a memoryless MAC.
In \cite{LangbergEffros1}, Langberg and Effros
demonstrate a connection between the edge removal problem for edges with negligible capacity and 
the equivalence between zero-error and $\epsilon$-error capacities in network coding. 
In recent work \cite{SingleBit}, Langberg and Effros show the existence of a network 
where even a single bit of communication results in a strictly larger maximal-error capacity region. 

Given that one may view feedback as a form of cooperation, similar questions 
may be posed about feedback and reliability. In \cite{Dueck}, Dueck 
shows that for some MACs, the maximal-error capacity region with feedback 
is strictly contained in the average-error region without feedback. 
This contrasts with our results on encoder cooperation via 
a CF that has access to both messages and output edges of negligible capacity.
Specifically, we show in Subsection \ref{subsec:negligible} that the maximal-error region 
of a MAC with negligible encoder cooperation of this kind contains 
the average-error region of the same MAC without encoder cooperation. 
For further discussion of results regarding feedback and the average- and
maximal-error regions of a MAC, we refer the reader to Cai \cite{Cai}. 

Other models under which maximal- and average-error capacity regions are 
identical include networks where one of the MAC encoders is ``stochastic,'' 
that is, its codewords 
depend on some randomly generated key in addition to its message. 
For such codes, the definitions of the maximal- and average-error probabilities 
require an expectation with respect to the distribution of the random bits. Cai shows in
\cite{Cai} that the maximal-error capacity region of a MAC where one encoder has access to a random
key of negligible rate equals the average-error capacity region of the same MAC when both 
encoders are deterministic. While some of the techniques we use in this paper are 
conceptually similar to Cai's proof
\cite{Cai}, the respective models are rather different. 
For example, it holds that stochastic encoders cannot
achieve higher rates than deterministic encoders under average error, 
even if they have access to random keys with positive rates. The same result
however, is not true of the cooperation model we study here when the cooperation
rate is positive \cite{kUserMACLong}. That is, at least for some MACs, a positive cooperation 
rate leads to a strictly positive gain. 
Furthermore, even for a negligible cooperation rate,
while we do not demonstrate a gain in the average-error capacity region, 
we are not able to rule out such a 
gain using the same proof that applies in the case of stochastic encoders.  

In the next section, we formally introduce our model. 
A discussion of our results follows in Section \ref{sec:results}. 

\section{Model} \label{sec:model}

Consider a network comprising two encoders, a 
cooperation facilitator (CF), a MAC
\begin{equation*}
  \big(\mathcal{X}_1\times\mathcal{X}_2,
	p(y|x_1,x_2),\mathcal{Y}\big),
\end{equation*}
and a decoder as depicted in Figure \ref{fig:network}. A CF is a node that communicates with the encoders
prior to the transmission of the codewords over the channel. This 
communication is made possible through noiseless links of capacities
$C_\mathrm{in}^1$ and $C_\mathrm{in}^2$ going from the CF to the encoders
and noiseless links of capacities $C_\mathrm{out}^1$ and $C_\mathrm{out}^2$ going 
back. 

Here our MAC may be discrete or continuous. In a discrete
MAC, the alphabets $\mathcal{X}_1$, $\mathcal{X}_2$, and $\mathcal{Y}$
are either finite or countably infinite, and for each $(x_1,x_2)$,
$p(y|x_1,x_2)$ is a probability mass function on $\mathcal{Y}$. In a continuous
MAC, $\mathcal{X}_1=\mathcal{X}_2=\mathcal{Y}=\mathbb{R}$, and 
$p(y|x_1,x_2)$ is a probability density function for each $(x_1,x_2)$. 
Furthermore, our MAC is memoryless and without feedback \cite[p. 193]{CoverThomas}, 
so for every positive integer $n$, the $n$th extension of our MAC is given 
by $(\mathcal{X}_1^n\times\mathcal{X}_2^n,p(y^n|x_1^n,x_2^n),\mathcal{Y}^n)$,
where
\begin{equation*}
  p(y^n|x_1^n,x_2^n)=\prod_{t=1}^n
	p(y_t|x_{1t},x_{2t}).
\end{equation*}

The following definitions aid our description of an $(n,M_1,M_2,J)$-code 
with transmitter cooperation for this network.
For every real number $x\geq 1$, let $[x]$ denote the set $\{1,\dots,\lfloor x\rfloor\}$,
where $\lfloor x\rfloor$ denotes the integer part of $x$. For each $i\in\{1,2\}$, fix 
two sequences of sets $(\mathcal{U}_{ij})_{j=1}^J$ and $(\mathcal{V}_{ij})_{j=1}^J$ 
such that 
\begin{align*}
  \log\big|\mathcal{U}_i^J\big|&=\sum_{j=1}^J\log|\mathcal{U}_{ij}|
  \leq nC_\mathrm{in}^i\\
  \log\big|\mathcal{V}_i^J\big|&=\sum_{j=1}^J\log|\mathcal{V}_{ij}|
  \leq nC_\mathrm{out}^i,
\end{align*}
where for all $j\in [J]$,
\begin{align*}
\mathcal{U}_i^j &=\prod_{\ell=1}^j\mathcal{U}_{i\ell}\\
\mathcal{V}_i^j &=\prod_{\ell=1}^j\mathcal{V}_{i\ell}, 
\end{align*}
and $\log$ denotes the logarithm base 2. Here $\mathcal{U}_{ij}$ represents
the alphabet for the round-$j$ transmission from encoder $i$ to the 
CF while $\mathcal{V}_{ij}$ represents the alphabet for the
round-$j$ transmission from the CF to encoder $i$. The given
alphabet size constraints are chosen to match the total rate constraints $nC_\mathrm{in}^i$
and $nC_\mathrm{out}^i$ over $J$ rounds of communication between the two 
encoders and $n$ uses of the channel. 
For $i\in\{1,2\}$, encoder $i$ is represented by $((\varphi_{ij})_{j=1}^J,f_i)$, 
where
\begin{equation*}
  \varphi_{ij} :[M_i]\times\mathcal{V}_i^{j-1} \rightarrow \mathcal{U}_{ij}
\end{equation*}
captures the round-$j$ transmission from encoder $i$ to the CF, and
\begin{equation*}
  f_i : [M_i]\times \mathcal{V}_i^J\rightarrow \mathcal{X}_i^n
\end{equation*}
captures the transmission of encoder $i$ across the channel.\footnote{Our 
results continue to hold for the case where the encoders 
satisfy individual cost constraints, since the same proofs 
apply with no modification.}
The CF is represented by the functions
$((\psi_{1j})_{j=1}^J,(\psi_{2j})_{j=1}^J)$, where
for $i\in\{1,2\}$ and $j\in [J]$,
\begin{equation*}
  \psi_{ij}:\mathcal{U}_1^j\times\mathcal{U}_2^j
  \rightarrow \mathcal{V}_{ij}
\end{equation*}
captures the round-$j$ transmission from the CF to encoder $i$.
For each message pair $(m_1,m_2)$ and $i\in\{1,2\}$,
define the sequences $(u_{ij})_{j\in [J]}$ and $(v_{ij})_{j\in [J]}$
recursively as 
\begin{align}
  u_{ij} &= \varphi_{ij}(m_i,v_i^{j-1}) \label{eq:uij}\\
  v_{ij} &= \psi_{ij}(u_1^j,u_2^j). \label{eq:vij}
\end{align}
In round $j$, encoder $i$ sends $u_{ij}$ to the CF
and receives $v_{ij}$ from the CF. After the $J$ round
communication between the encoders and the CF is over, encoder
$i$ transmits $f_i(m_i,v_i^J)$ over the channel. 
The decoder is represented by the function
\begin{equation*}
  g:\mathcal{Y}^n\rightarrow [M_1]\times [M_2].
\end{equation*}
The probability that a message pair $(m_1,m_2)$ is
decoded incorrectly is given by
\begin{equation*}
  \lambda_n(m_1,m_2)=\sum_{y^n\notin g^{-1}(m_1,m_2)}
  p\big(y^n|f_1(m_1,v_1^J),f_2(m_2,v_2^J)\big),
\end{equation*}
where
\begin{equation*}
  g^{-1}(m_1,m_2)
	=\big\{y^n\big|g(y^n)=(m_1,m_2)\big\}.
\end{equation*}
Note that $\lambda_n$ depends only on $(m_1,m_2)$ since by (\ref{eq:uij})
and (\ref{eq:vij}), $v_1^J$ and 
$v_2^J$ are deterministic functions of $(m_1,m_2)$. 
The average probability of error, $P_{e,\mathrm{avg}}^{(n)}$, and the 
maximal probability of error, $P_{e,\mathrm{max}}^{(n)}$, are defined as
\begin{align*}
  P_{e,\mathrm{avg}}^{(n)}
  &=\frac{1}{M_1M_2}\sum_{m_1,m_2}\lambda_n(m_1,m_2)\\
  P_{e,\mathrm{max}}^{(n)}
  &=\max_{m_1,m_2}\lambda_n(m_1,m_2),
\end{align*}
respectively. To quantify the reliability benefit of rate-limited 
cooperation, we require a more general notion of probability of 
error, which we next describe. 

For $r_1,r_2\geq 0$, the $(r_1,r_2)$-error probability $P_e^{(n)}(r_1,r_2)$
is a compromise between average and maximal error probability. 
To compute $P_e^{(n)}(r_1,r_2)$, we partition the matrix
\begin{equation} \label{eq:Lambda}
  \Lambda_n:=\big(\lambda_n(m_1,m_2)\big)_{m_1,m_2}
\end{equation}
into $K_1 K_2$ blocks of size $L_1\times L_2$, where for $i\in \{1,2\}$, 
\begin{align*}
  K_i &=\min\big\{\lfloor 2^{nr_i}\rfloor,M_i\big\}\\
  L_i &=\lfloor M_i/K_i\rfloor,
\end{align*}
and a single block containing the remaining $M_1M_2-K_1K_2L_1L_2$ 
entries. 
We begin by calculating the \emph{average} of the entries within each 
$L_1\times L_2$ block and obtain the $K_1K_2$ values 
\begin{equation*}
  \Bigg\{\frac{1}{L_1L_2}
  \sum_{\substack{m_1\in S_{1,k_1}\\ m_2\in S_{2,k_2}}}
  \lambda_n(m_1,m_2)\Bigg\}_{(k_1,k_2)},
\end{equation*}
where for $i\in\{1,2\}$ and $k_i\in [K_i]$,
the set $S_{i,k_i}\subseteq [M_i]$ is defined as
\begin{equation} \label{eq:Sk}
  S_{i,k_i}=\Big\{(k_i-1)L_i+1,\dots,k_i L_i\Big\}.
\end{equation}
Next we find the \emph{maximum} of the $K_1 K_2$ obtained average values,
namely 
\begin{equation} \label{eq:notPe}
  \max_{k_1,k_2}\frac{1}{L_1L_2}
  \sum_{\substack{m_1\in S_{1,k_1}\\ m_2\in S_{2,k_2}}}
  \lambda_n(m_1,m_2).
\end{equation}
The maximum in (\ref{eq:notPe}) depends on the labeling of the messages, which is
not desirable. To avoid this issue, we calculate the minimum of (\ref{eq:notPe})
over all permutations of the rows and columns of $\Lambda_n$. This results in the definition
\begin{equation*}
  P_e^{(n)}(r_1,r_2)=\min_{\pi_1,\pi_2}\max_{k_1,k_2}\frac{1}{L_1L_2}
  \sum_{\substack{m_1\in S_{1,k_1}\\ m_2\in S_{2,k_2}}}
  \lambda_n(\pi_1(m_1),\pi_2(m_2)),
\end{equation*}
where the minimum is over all permutations $\pi_1$ and $\pi_2$ of the sets $[M_1]$
and $[M_2]$, respectively. Note that $P_{e,\mathrm{avg}}^{(n)}$ and
$P_{e,\mathrm{max}}^{(n)}$ are special cases of $P_e^{(n)}(r_1,r_2)$, since
\begin{equation*}
  P_e^{(n)}(0,0) = P_{e,\mathrm{avg}}^{(n)},
\end{equation*}
and for sufficiently large values of $r_1$ and $r_2$, 
\begin{equation*}
  P_e^{(n)}(r_1,r_2) = P_{e,\mathrm{max}}^{(n)}.
\end{equation*}
 
We say a rate pair $(R_1,R_2)$ is $(r_1,r_2)$-error achievable
for a MAC with a 
$(\mathbf{C}_\mathrm{in},\mathbf{C}_\mathrm{out})$-CF and $J$ rounds of
cooperation if for all $\epsilon,\delta>0$,
and for $n$ sufficiently large, there exists an
$(n,M_1,M_2,J)$-code such that 
\begin{equation} \label{eq:kayEll}
  \frac{1}{n}\log (K_iL_i) \geq R_i-\delta
\end{equation}
for $i\in\{1,2\}$, and $P_e^{(n)}(r_1,r_2)\leq \epsilon$. In (\ref{eq:kayEll}),
we use $K_iL_i$ instead of $M_i$ since only $K_iL_i$ elements of
$[M_i]$ are used in calculating $P_e^{(n)}(r_1,r_2)$.
We define the $(r_1,r_2)$-error capacity region as the closure of the
set of all rates that are $(r_1,r_2)$-error achievable.

\section{Results} \label{sec:results}

We describe our results in this section. 
In Subsection \ref{subsec:coopRel}, we quantify the relation between 
cooperation under the CF model and reliability. 
In Subsection \ref{subsec:AvgMax}, we determine
the cooperation rate sufficient to guarantee equality between the 
maximal- and average-error capacity regions under the CF model. 
In Subsection \ref{subsec:negligible}, we define and study negligible
cooperation. Finally, we determine the reliability benefit of the conferencing
model in Subsection \ref{subsec:confRel}.  

\subsection{Cooperation and Reliability} \label{subsec:coopRel}

Our first result, Theorem \ref{thm:main}, says that if a rate pair is 
achievable for a MAC with a CF under average error, then sufficiently
increasing the capacities of the CF links ensures that the same rate 
pair is also achievable under a stricter notion of error. This result applies
to any memoryless MAC whose average-error capacity region is bounded. Prior
to stating this result, we introduce notation used in 
Theorem \ref{thm:main}. 

Define the nonnegative numbers $R_1^*$ and $R_2^*$ as 
the maximum of $R_1$ and $R_2$ over the average-error capacity region of a MAC with a 
$(\mathbf{C}_\mathrm{in},\mathbf{C}_\mathrm{out})$-CF and $J$ cooperation rounds.
Each rate is maximized when the other rate is set to zero. When one encoder transmits at rate zero,
cooperation through a CF is no more powerful then direct conferencing. Thus
$R_1^*$ and $R_2^*$ equal the corresponding maximal rates in the capacity region of the MAC 
with conferencing encoders \cite{WillemsMAC}. Hence,
\begin{align*}
  R_1^* &= \max_{X_1-U-X_2}\min\big\{I(X_1;Y|U,X_2)+C_{12},I(X_1,X_2;Y)\big\}\\
  R_2^* &= \max_{X_1-U-X_2}\min\big\{I(X_2;Y|U,X_1)+C_{21},I(X_1,X_2;Y)\big\},
\end{align*}
where 
$C_{12} = \min\{C_\mathrm{in}^1,C_\mathrm{out}^2\}$ and
$C_{21} = \min\{C_\mathrm{in}^2,C_\mathrm{out}^1\}$.
Note that since using multiple
conferencing rounds does not enlarge the average-error capacity region
for the 2-user MAC \cite{WillemsMAC},
$R_1^*$ and $R_2^*$ do not depend on $J$. 
 
\begin{thm}[Reliability under CF model] \label{thm:main} 
If $\tilde{J}\geq J+1$, and for $i\in\{1,2\}$,
\begin{align*}
  \tilde{C}^i_\mathrm{in} &> \min\{C^i_\mathrm{in}+\tilde{r}_i,R_i^*\} \\
  \tilde{C}^i_\mathrm{out} &> C^i_\mathrm{out},
\end{align*}
then the $(\tilde{r}_1,\tilde{r}_2)$-error capacity region 
of a MAC with a 
$(\mathbf{\tilde{C}}_\mathrm{in},\mathbf{\tilde{C}}_\mathrm{out})$-CF 
and $\tilde{J}$ rounds of cooperation 
contains the average-error capacity region of the same MAC with a 
$(\mathbf{C}_\mathrm{in},\mathbf{C}_\mathrm{out})$-CF and $J$ rounds of 
cooperation. 
Furthermore, if for $i\in\{1,2\}$, $\tilde{C}^i_\mathrm{in} > R_i^*$, 
$\tilde{J}= 1$ suffices. Similarly, $\tilde{J}=1$ suffices
when $\mathbf{C}_\mathrm{in}=\mathbf{0}$.
\end{thm}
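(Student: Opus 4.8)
The plan is to start from a base code and convert small average error into small $(\tilde r_1,\tilde r_2)$-error by letting the enhanced facilitator reshape the code's error matrix. Concretely, fix a rate pair $(R_1,R_2)$ in the average-error capacity region of the MAC with the $(\mathbf C_\mathrm{in},\mathbf C_\mathrm{out})$-CF, and fix an $(n,M_1,M_2,J)$-code achieving it with $P_{e,\mathrm{avg}}^{(n)}\le\epsilon$, where $M_i\doteq 2^{nR_i}$. I keep the same message sets and write each message as $m_i=(a_i,b_i)$ with $a_i\in[\tilde K_i]$, $\tilde K_i\doteq 2^{n\tilde r_i}$, the block coordinate, and $b_i\in[\tilde L_i]$, $\tilde L_i\doteq M_i/\tilde K_i$, the within-block coordinate. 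Measuring $(\tilde r_1,\tilde r_2)$-error amounts to asking that, after some relabeling of rows and columns, every one of the $\tilde K_1\tilde K_2$ blocks of the error matrix has small average; the minimization over $\pi_1,\pi_2$ in the definition supplies this relabeling for free, so the whole task reduces to building a new code whose error matrix admits a product partition into $\tilde L_1\times\tilde L_2$ blocks of small average.

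The mechanism is Willems-style and exploits the asymmetry in the hypotheses: the uplink budget grows by $\tilde r_i$ while the downlink grows only negligibly. First I would use the extra round together with the extra $\tilde r_i$ of input capacity so that, on top of the base protocol, each encoder reports its block index $a_i$ to the facilitator; the facilitator thus learns $(a_1,a_2)$ (and in the regime $\tilde C_\mathrm{in}^i>R_i^*$, the entire pair $(m_1,m_2)$). The facilitator then selects, as a function of $(a_1,a_2)$, one element of a pre-shared family of codebook modifications --- cyclic reindexings of the base codebook --- and communicates the chosen index to the two encoders over the negligibly enlarged output links, at rate $o(1)$. Each encoder transmits the correspondingly reindexed base codeword. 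Because the selection is a deterministic function of $(a_1,a_2)$ and the induced global reindexing is arranged to be a bijection, the decoder can first recover the block coordinates and then invert the modification to recover $(m_1,m_2)$, so decodability is preserved. The point of this construction is that the per-message error of the new code equals a base-code error at a relocated position, and the facilitator, having chosen the best modification for the block $(a_1,a_2)$, makes that block's average error as small as the best available modification allows.

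The crux --- and the step I expect to be the main obstacle --- is the selection lemma: that for every block there exists a modification in the negligible-rate family driving that block's average base error down to (essentially) the global average $\epsilon$. This is exactly the tension at the heart of the paper, namely that the facilitator may learn the block indices cheaply but may feed back only a negligible number of bits. I would attack it by an averaging argument over the modification family: design the family so that, for each fixed block, the relocated positions sweep the base matrix sufficiently uniformly that the block-averaged error has mean close to $\epsilon$, whence a below-mean modification exists and can be named with $o(n)$ bits. Since the facilitator chooses the modification separately for each block, no union bound over the $\tilde K_1\tilde K_2$ blocks is needed; what must be controlled instead is that a negligible-rate, decoder-invertible family --- one whose per-block choices still assemble into a single global bijection --- is rich enough to realize this per-block averaging, which is where I anticipate the real work and where the precise numerology $\tilde C_\mathrm{in}^i>\min\{C_\mathrm{in}^i+\tilde r_i,R_i^*\}$ and $\tilde C_\mathrm{out}^i>C_\mathrm{out}^i$ enters.

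Finally I would dispatch the two refinements. When $\tilde C_\mathrm{in}^i>R_i^*\ge R_i$, a single round lets the facilitator learn each full message; it can then internally simulate the entire $J$-round base interaction and hand each encoder the base feedback $v_i^J$ it would have received --- together with the negligible selection index --- in one shot, so $\tilde J=1$ suffices and no separate round is needed to learn the block indices. When $\mathbf C_\mathrm{in}=\mathbf 0$, the base protocol carries no encoder-to-facilitator traffic, so the base feedback does not depend on the messages; the lone new round then both conveys the block indices upward and returns the selection, again giving $\tilde J=1$. In both cases the $(\tilde r_1,\tilde r_2)$-error guarantee follows from the same selection lemma applied to the (now possibly trivial) base code.
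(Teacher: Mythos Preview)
Your overall architecture matches the paper: report the block index $a_i$ upward using the extra $\tilde r_i$ of uplink, have the facilitator pick a small ``offset'' per block and send it back at rate $o(1)$, then transmit with the base code at the relocated position. Your handling of the two refinements ($\tilde C_\mathrm{in}^i>R_i^*$ and $\mathbf C_\mathrm{in}=\mathbf 0$) is also correct and essentially what the paper does.

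The gap is at the selection lemma, precisely where you anticipate it. Your plan is a per-block averaging over a negligible-rate family of modifications, and you explicitly argue that ``no union bound over the $\tilde K_1\tilde K_2$ blocks is needed.'' But for a polynomial-size family (as forced by the $o(1)$ downlink), averaging over the family at a \emph{fixed} block does not in general reproduce the global average $\epsilon$; that would require the family to sweep essentially the whole matrix, i.e., to have exponential size. So your per-block existence step does not follow from the stated averaging, and the tension you flag---``a negligible-rate, decoder-invertible family\ldots rich enough to realize this per-block averaging''---is exactly the missing idea, not a detail.

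The paper resolves it differently. It first thresholds the $L_1\times L_2$-block-averaged error matrix into a $(0,1)$-matrix $A$ (entry $1$ iff the block average exceeds $e^3\epsilon$), so that at most an $e^{-3}$ fraction of entries are $1$. It then proves a combinatorial lemma (Lemma~\ref{lem:matrix}) by the probabilistic method: under random row and column permutations, the probability that a fixed $K_*\times K_*$ window is all ones is at most $(e^{-1})^{K_*}$, and a union bound over the roughly $2^{n(R_1^*+R_2^*)}$ windows succeeds once $K_*=\lceil n(R_1^*+R_2^*+2\delta)\rceil$. Thus a \emph{single} global pair of permutations is fixed in advance, and the per-block ``family'' is just the $K_*^2$ shifts within a window---automatically decoder-invertible and of rate $O((\log n)/n)$. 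In short, the paper \emph{does} use a union bound over blocks; what makes it work is the thresholding (which makes each window fail with probability exponentially small in $K_*$) together with choosing $K_*$ linear in $n$. Your proposal lacks a substitute for this step.
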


A detailed proof of Theorem \ref{thm:main} appears in
Subsection \ref{subsec:main}. Roughly, the argument involves modifying 
an $(n,M_1,M_2,J)$ average-error 
code for a MAC with a $(\mathbf{C}_\mathrm{in},\mathbf{C}_\mathrm{out})$-CF to
get an $(n,\tilde{M}_1,\tilde{M}_2,\tilde{J})$ code for the same MAC
with a $(\mathbf{\tilde{C}}_\mathrm{in},\mathbf{\tilde{C}}_\mathrm{out})$-CF.
Our aim is to obtain small $(\tilde{r}_1,\tilde{r}_2)$ probability of
error and $\frac{1}{n}\log\tilde{M}_i$
only slightly smaller than $\frac{1}{n}\log M_i$ for $i\in\{1,2\}$.   
To achieve this goal, we first partition $\Lambda_n$, as given by (\ref{eq:Lambda}),
into $2^{n\tilde{r}_1}\times 2^{n\tilde{r}_2}$ blocks. 
We next construct a 
$2^{n\tilde{r}_1}\times 2^{n\tilde{r}_2}$ $(0,1)$-matrix, where entry 
$(k_1,k_2)$ equals zero if
the average of the $\lambda_n(m_1,m_2)$ entries in the corresponding 
block $(k_1,k_2)$ of $\Lambda_n$ is small, and equals one otherwise. 
(See Figure \ref{fig:matrix}.)

Next, we partition our (0,1)-matrix into blocks of size roughly $n\times n$. 
For each $i$, let $m_i$ denote the message of encoder $i$. In the first
cooperation round, encoder $i$ sends the first $n\tilde{r}_i$ bits of $m_i$ to the CF 
so that the CF knows the block in the (0,1)-matrix that contains $(m_1,m_2)$. 
If there is at least one zero entry in that
block, the CF sends the location of that entry back to each encoder using  
$\log n$ bits. Then encoder $i$
modifies the first $n\tilde{r}_i$ bits of its message and communicates with the CF 
over $J$ rounds using the original average-error code. As a result of transmitting
$(m_1,m_2)$ pairs that correspond to zeros in our $(0,1)$-matrix, the encoders
ensure a small $(r_1,r_2)$-probability of error. 
\begin{figure} 
	\begin{center}
		\includegraphics[scale=0.3]{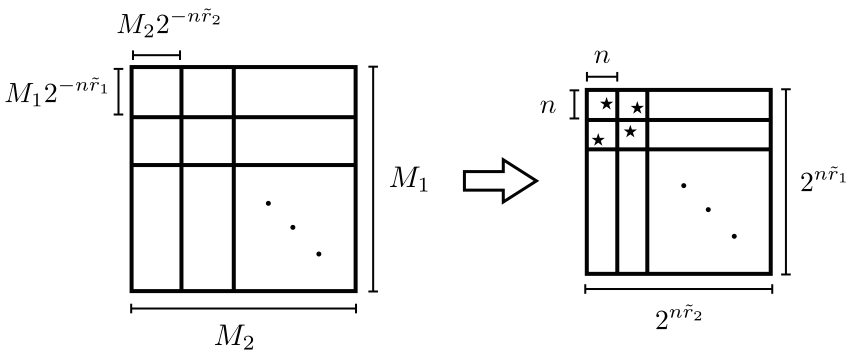}
		\caption{Left: The $M_1\times M_2$ matrix $\Lambda_n$ with entries 
		$\lambda_n(m_1,m_2)$. Right: The  
		$(0,1)$-matrix constructed from $\Lambda_n$.
		The stars indicate the location of the
		zeros.} \label{fig:matrix}
	\end{center}
\end{figure} 

It may be the case that not every block contains a zero entry. 
Lemma \ref{lem:matrix}, below, shows that if there is a sufficiently large
number of zeros in the (0,1)-matrix, then 
there exists a permutation of the rows and a permutation of the columns
such that each block of the permuted matrix contains at least one zero entry.
Since the original code has a small average error, it follows that
our (0,1)-matrix has a large number of zeros. The proof of Lemma \ref{lem:matrix}
appears in Subsection \ref{subsec:matrix}. 

\begin{lem} \label{lem:matrix}
Let $A=(a_{ij})_{i,j=1}^{m,n}$ be a $(0,1)$-matrix and let $N_A$
denote the number of ones in $A$.
Suppose $k$ is a positive integer smaller than or equal to $\min\{m,n\}$. For
any pair of permutations $(\pi_1,\pi_2)$, where $\pi_1$ is a permutation
on $[m]$ and $\pi_2$ is a permutation on $[n]$, and every
$(s,t)\in [\frac{m}{k}]\times[\frac{n}{k}]$, define the $k\times k$
matrix $B_{st}(\pi_1,\pi_2)$ as
\begin{equation*}
  B_{st}(\pi_1,\pi_2)=\big(a_{\pi_1(i)\pi_2(j)}\big),
\end{equation*}
where $i\in \{(s-1)k+1,\dots,sk\}$ and $j\in \{(t-1)k+1,\dots,tk\}$.
If 
\begin{equation*}
  \frac{mn}{k^2}\Big(\frac{N_Ae^2}{mn}\Big)^k<1,
\end{equation*}
then there exists a pair of permutations $(\pi_1,\pi_2)$ such that for
every $(s,t)$ the submatrix $B_{st}(\pi_1,\pi_2)$ contains at least one zero entry.
\end{lem}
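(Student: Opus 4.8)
The plan is to use the probabilistic method: I would draw the two permutations $\pi_1$ and $\pi_2$ independently and uniformly at random and show that, under the stated hypothesis, the expected number of \emph{all-ones} blocks $B_{st}(\pi_1,\pi_2)$ is strictly less than one. Since this count is a nonnegative integer, some realization of $(\pi_1,\pi_2)$ must attain the value zero, and for that pair every block contains a zero entry, which is exactly the claim.

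The first step is to compute the per-block probability. Fix a block $(s,t)$; its rows are the images under $\pi_1$ of a fixed set of $k$ row-indices and its columns the images under $\pi_2$ of a fixed set of $k$ column-indices. Because $\pi_1$ is a uniform permutation, its image on any fixed $k$-set is a uniformly random $k$-subset $R\subseteq[m]$, and likewise $\pi_2$ yields an independent uniform $k$-subset $C\subseteq[n]$. The block is all-ones exactly when every entry of the submatrix indexed by $R\times C$ equals one, so if $T$ denotes the number of $k\times k$ all-ones submatrices of $A$ (choices of $k$ rows and $k$ columns whose $k^2$ intersections are all one), then
\[
  \Pr\big[B_{st}(\pi_1,\pi_2)\text{ is all ones}\big]=\frac{T}{\binom{m}{k}\binom{n}{k}},
\]
a quantity that does not depend on $(s,t)$.

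The crux is bounding $T$, and I expect this to be the main obstacle, since $T$ depends delicately on how the ones are arranged. I would bound it by counting \emph{ordered} grids: tuples $(r_1,\dots,r_k)$ of distinct rows and $(c_1,\dots,c_k)$ of distinct columns with $a_{r_\alpha c_\beta}=1$ for all $\alpha,\beta$; each all-ones submatrix gives rise to exactly $(k!)^2$ such tuples. The key trick is to map each ordered grid to its ``diagonal'' sequence $\big((r_1,c_1),\dots,(r_k,c_k)\big)$, a sequence of $k$ distinct positions all carrying ones. This map is injective, since the diagonal recovers both ordered tuples and hence the grid, so the number of ordered grids is at most the number of length-$k$ injective sequences of ones, which is at most $N_A^k$. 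This yields $T\le N_A^k/(k!)^2$.

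Finally I would assemble the estimate. Using $\binom{m}{k}\binom{n}{k}\ge (m/k)^k(n/k)^k=(mn/k^2)^k$ together with the elementary bound $k!\ge (k/e)^k$ gives
\[
  \Pr\big[B_{st}\text{ is all ones}\big]\le\frac{N_A^k/(k!)^2}{(mn/k^2)^k}\le\Big(\frac{N_A e^2}{mn}\Big)^k.
\]
By linearity of expectation over the at most $mn/k^2$ blocks, the expected number of all-ones blocks is at most $\frac{mn}{k^2}\big(\tfrac{N_A e^2}{mn}\big)^k$, which is below one precisely under the hypothesis; the probabilistic method then delivers the desired pair $(\pi_1,\pi_2)$.
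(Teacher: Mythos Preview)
Your proposal is correct and follows essentially the same route as the paper: probabilistic method with independent uniform permutations, a union bound over the at most $mn/k^2$ blocks, and the key ``diagonal'' relaxation. The only cosmetic difference is the order of operations: the paper first relaxes the event $\{B_{11}=J_k\}$ to the weaker event that the diagonal entries $a_{\Pi_1(\ell)\Pi_2(\ell)}$ are all ones and then computes that probability, arriving at the bound $\binom{N_A}{k}\big/\big(k!\binom{m}{k}\binom{n}{k}\big)$, whereas you keep the exact all-ones probability $T/\big(\binom{m}{k}\binom{n}{k}\big)$ and bound $T$ via the injection of ordered grids into their diagonal sequences. Both lead to the same estimate after applying $k!\ge (k/e)^k$ and $\binom{m}{k}\ge (m/k)^k$.
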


\subsection{The Average- and Maximal-Error Capacity Regions} \label{subsec:AvgMax}

For every $(\mathbf{C}_\mathrm{in},\mathbf{C}_\mathrm{out})\in\mathbb{R}_{\geq 0}^4$,
let $\mathscr{C}_\mathrm{avg}^J(\mathbf{C}_\mathrm{in},\mathbf{C}_\mathrm{out})$
denote the average-error capacity region of a MAC with a 
$(\mathbf{C}_\mathrm{in},\mathbf{C}_\mathrm{out})$-CF with $J$ cooperation rounds.
Let  
\begin{equation*}
  \mathscr{C}_\mathrm{avg}(\mathbf{C}_\mathrm{in},\mathbf{C}_\mathrm{out})=
	\overline{
  \bigcup_{J=1}^\infty \mathscr{C}_\mathrm{avg}^J(\mathbf{C}_\mathrm{in},\mathbf{C}_\mathrm{out})},
\end{equation*}
where for any set $A\subseteq\mathbb{R}^2_{\geq 0}$, 
$\bar{A}$ denotes the closure of $A$. 
Define $\mathscr{C}_\mathrm{max}^J$ and $\mathscr{C}_\mathrm{max}$
similarly. 

We next introduce a generalization of the notion of sum-capacity
which is useful for the results of this section. Let $\mathscr{C}$ be 
a compact subset of $\mathbb{R}^2_{\geq 0}$. For every $\alpha\in [0,1]$
define
\begin{equation} \label{eq:Calpha}
  C^\alpha(\mathscr{C})
	=\max_{(x,y)\in\mathscr{C}}
	\big(\alpha x+(1-\alpha)y\big).
\end{equation}
Note that $C^\alpha$ is the value of the support function of $\mathscr{C}$ computed
with respect to the vector $(\alpha,1-\alpha)$ \cite[p. 37]{Schneider}. 
When $\mathscr{C}$ is the capacity region of a network,
$C^{1/2}(\mathscr{C})$ equals half the corresponding sum-capacity.

Now consider a MAC with a $(\mathbf{C}_\mathrm{in},\mathbf{C}_\mathrm{out})$-CF.
For every $\alpha\in [0,1]$, define 
\begin{equation*}
  C^\alpha_\mathrm{avg}(\mathbf{C}_\mathrm{in},\mathbf{C}_\mathrm{out})
	=C^\alpha\big(\mathscr{C}_\mathrm{avg}(\mathbf{C}_\mathrm{in},\mathbf{C}_\mathrm{out})\big).
\end{equation*}
Define $C^\alpha_\mathrm{max}(\mathbf{C}_\mathrm{in},\mathbf{C}_\mathrm{out})$
similarly.

Theorem \ref{thm:MaxEqAvg}, our main result, follows. 
This theorem states that
cooperation through a CF that has access to both messages results in 
a network whose maximal- and average-error capacity regions are identical.
We address the necessity of the assumption that the CF has access to both messages
in Proposition \ref{prop:MaxNeqAvg} at the end of this subsection. 

\begin{thm} \label{thm:MaxEqAvg}
For a given MAC $(\mathcal{X}_1\times\mathcal{X}_2,p(y|x_1,x_2),\mathcal{Y})$, let 
$\mathbf{C}_\mathrm{in}^*=(C^{*1}_\mathrm{in},C^{*2}_\mathrm{in})$ be any 
rate vector that satisfies
\begin{equation*}
 \min\{C_\mathrm{in}^{*1},C_\mathrm{in}^{*2}\}
 > \max_{p(x_1,x_2)}I(X_1,X_2;Y).
\end{equation*}
Then for every $\mathbf{C}_\mathrm{out}\in\mathbb{R}_{>0}^2$,
\begin{equation*}
  \mathscr{C}_\mathrm{max}(\mathbf{C}_\mathrm{in}^*,\mathbf{C}_\mathrm{out})
	=\mathscr{C}_\mathrm{avg}(\mathbf{C}_\mathrm{in}^*,\mathbf{C}_\mathrm{out}).
\end{equation*}
\end{thm}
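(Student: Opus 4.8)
The plan is to prove the two inclusions separately. The inclusion $\mathscr{C}_\mathrm{max}(\mathbf{C}_\mathrm{in}^*,\mathbf{C}_\mathrm{out})\subseteq\mathscr{C}_\mathrm{avg}(\mathbf{C}_\mathrm{in}^*,\mathbf{C}_\mathrm{out})$ is immediate: since $P_{e,\mathrm{avg}}^{(n)}\le P_{e,\mathrm{max}}^{(n)}$ for every code, any code that is good under maximal error is good under average error at the same rate and round count. All of the content therefore lies in the reverse inclusion, which I would obtain by feeding Theorem \ref{thm:main} the right parameters.

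The first thing I would record is that the hypothesis forces $R_i^*<C_\mathrm{in}^{*i}$ for $i\in\{1,2\}$. Indeed, since $\min\{a,b\}\le b$ and the constraint $X_1-U-X_2$ only restricts the joint law $p(x_1,x_2)$, the displayed formula for $R_1^*$ gives $R_1^*\le\max_{X_1-U-X_2}I(X_1,X_2;Y)\le\max_{p(x_1,x_2)}I(X_1,X_2;Y)<C_\mathrm{in}^{*1}$, and symmetrically for $R_2^*$. This places us exactly in the regime of the ``furthermore'' clause of Theorem \ref{thm:main}: taking $\tilde C_\mathrm{in}^i=C_\mathrm{in}^{*i}>R_i^*$ we may use $\tilde J=1$, and, crucially, the in-capacity hypothesis $\tilde C_\mathrm{in}^i>\min\{C_\mathrm{in}^i+\tilde r_i,R_i^*\}$ collapses to $\tilde C_\mathrm{in}^i>R_i^*$, which holds independently of $\tilde r_i$. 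I am thus free to take $\tilde r_1,\tilde r_2$ as large as I wish.

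With this freedom, I would fix $\gamma\in(0,1)$ and apply Theorem \ref{thm:main} to the original network $\big(\mathbf{C}_\mathrm{in}^*,(1-\gamma)\mathbf{C}_\mathrm{out}\big)$, keeping $\tilde{\mathbf{C}}_\mathrm{in}=\mathbf{C}_\mathrm{in}^*$ and raising the out-capacities to $\tilde{\mathbf{C}}_\mathrm{out}=\mathbf{C}_\mathrm{out}$; here the strict inequality $\tilde C_\mathrm{out}^i=C_\mathrm{out}^i>(1-\gamma)C_\mathrm{out}^i$ required by the theorem is exactly what $\mathbf{C}_\mathrm{out}\in\mathbb{R}_{>0}^2$ and $\gamma>0$ provide. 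Choosing $\tilde r_i>R_i^*$ forces $K_i=M_i$, so $P_e^{(n)}(\tilde r_1,\tilde r_2)=P_{e,\mathrm{max}}^{(n)}$ and the resulting $(\tilde r_1,\tilde r_2)$-error region is the maximal-error region. Running this for each fixed $J$ and taking the closure of the union over $J$ gives, for every $\gamma\in(0,1)$,
\[
  \mathscr{C}_\mathrm{avg}\big(\mathbf{C}_\mathrm{in}^*,(1-\gamma)\mathbf{C}_\mathrm{out}\big)
  \subseteq\mathscr{C}_\mathrm{max}(\mathbf{C}_\mathrm{in}^*,\mathbf{C}_\mathrm{out}).
\]

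It remains to close the gap as $\gamma\to0$, and this continuity-from-below step is the part I expect to need the most care, since Theorem \ref{thm:main} itself is assumed and carries the real coding-theoretic weight. I would establish
\[
  \mathscr{C}_\mathrm{avg}(\mathbf{C}_\mathrm{in}^*,\mathbf{C}_\mathrm{out})
  =\overline{\bigcup_{0<\gamma<1}\mathscr{C}_\mathrm{avg}\big(\mathbf{C}_\mathrm{in}^*,(1-\gamma)\mathbf{C}_\mathrm{out}\big)}
\]
by a padding argument: given an $(n,M_1,M_2,J)$ average-error code at out-capacities $\mathbf{C}_\mathrm{out}$, which spends at most $nC_\mathrm{out}^i$ output bits, I embed it into a blocklength-$N$ code with $N=\lceil n/(1-\gamma)\rceil$ by running the same $J$-round conference and the same channel codeword, then transmitting fixed symbols on the remaining $N-n$ channel uses, which the decoder ignores. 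The output budget available at out-capacity $(1-\gamma)C_\mathrm{out}^i$ over $N$ uses is $N(1-\gamma)C_\mathrm{out}^i\ge nC_\mathrm{out}^i$, so the conference is admissible; the channel is memoryless so the average error is unchanged; and the rate is $(1-\gamma)(R_1,R_2)$. Letting $\gamma\to0$ exhibits every point of $\mathscr{C}_\mathrm{avg}(\mathbf{C}_\mathrm{in}^*,\mathbf{C}_\mathrm{out})$ as a limit of points in regions with strictly smaller out-capacities, which gives the displayed equality. Combining it with the previous inclusion yields $\mathscr{C}_\mathrm{avg}(\mathbf{C}_\mathrm{in}^*,\mathbf{C}_\mathrm{out})\subseteq\mathscr{C}_\mathrm{max}(\mathbf{C}_\mathrm{in}^*,\mathbf{C}_\mathrm{out})$, and with the trivial inclusion, equality. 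The two points to watch are the bookkeeping in the padding step—lining up the conference budget and the rate normalization exactly—and confirming that the union/closure over $J$ and over $\gamma$ interact correctly with the fixed-$J$ form in which Theorem \ref{thm:main} is stated.
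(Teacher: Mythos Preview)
Your argument is correct and follows the same high-level strategy as the paper—apply Theorem~\ref{thm:main} in the regime $\tilde C_\mathrm{in}^i>R_i^*$ with $\tilde r_i$ large, then close a gap in $\mathbf{C}_\mathrm{out}$ by a continuity step—but you handle the continuity step differently. The paper fixes $\mathbf{C}_\mathrm{out}$ as the \emph{smaller} parameter and takes $\tilde{\mathbf{C}}_\mathrm{out}\to\mathbf{C}_\mathrm{out}^+$; it then needs continuity of $\mathscr{C}_\mathrm{max}$ from above on $\mathbb{R}^4_{>0}$, which it obtains by passing to the support functions $C^\alpha$, proving their concavity via time-sharing (Lemma~\ref{lem:continuity}), and recovering the region from $\{C^\alpha\}_\alpha$ via Lemma~\ref{lem:char}. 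You instead fix $\mathbf{C}_\mathrm{out}$ as the \emph{larger} parameter and let $(1-\gamma)\mathbf{C}_\mathrm{out}\to\mathbf{C}_\mathrm{out}^-$, needing only continuity-from-below of $\mathscr{C}_\mathrm{avg}$, which your blocklength-inflation argument supplies directly (it is, in effect, time-sharing with the zero-rate code). Your route is more elementary and self-contained, avoiding the $C^\alpha$ machinery entirely; the paper's route invests in that machinery because it is reused for Propositions~\ref{prop:discontinuity} and~\ref{prop:conf}. The bookkeeping caveats you flag (the per-$J$ statement of Theorem~\ref{thm:main} and the closure over $J$, and checking the in-link budget $NC_\mathrm{in}^{*i}\ge nC_\mathrm{in}^{*i}$ in the padding step) are all benign.
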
 

The following discussion gives the intuition behind the proof
of Theorem \ref{thm:MaxEqAvg}. Details follow in Subsection \ref{subsec:MaxEqAvg}. 
First, using
Theorem \ref{thm:main}, we show that for every 
$\mathbf{C}_\mathrm{out}=(C_\mathrm{out}^1,C_\mathrm{out}^2)$
and $\mathbf{\tilde{C}}_\mathrm{out}=(\tilde{C}_\mathrm{out}^1,\tilde{C}_\mathrm{out}^2)$
in $\mathbb{R}_{>0}^2$ with
$\tilde{C}^1_\mathrm{out}>C^1_\mathrm{out}$
and $\tilde{C}^2_\mathrm{out}>C^2_\mathrm{out}$, we have
\begin{equation*}
  \mathscr{C}_\mathrm{avg}(\mathbf{C}_\mathrm{in}^*,\mathbf{C}_\mathrm{out})
  \subseteq 
  \mathscr{C}_\mathrm{max}(\mathbf{C}_\mathrm{in}^*,\mathbf{\tilde{C}}_\mathrm{out}).
\end{equation*}
Note that $\mathscr{C}_\mathrm{avg}(\mathbf{C}_\mathrm{in}^*, \mathbf{C}_\mathrm{out})$
contains $\mathscr{C}_\mathrm{max}(\mathbf{C}_\mathrm{in}^*, \mathbf{C}_\mathrm{out})$. Thus
a continuity argument may be helpful in proving equality between 
the average- and maximal-error capacity regions. Since studying 
$C^\alpha$ is simpler than studying the
capacity region directly, we formulate our problem in terms of $C^\alpha$.
For every $\alpha\in [0,1]$, we have
\begin{equation} \label{eq:CalphaULB}
  C^\alpha_\mathrm{max}(\mathbf{C}_\mathrm{in}^*, \mathbf{C}_\mathrm{out})
	\leq C^\alpha_\mathrm{avg}(\mathbf{C}_\mathrm{in}^*, \mathbf{C}_\mathrm{out})
	\leq C^\alpha_\mathrm{max}(\mathbf{C}_\mathrm{in}^*, \mathbf{\tilde{C}}_\mathrm{out}).
\end{equation}
The next lemma, for fixed $\alpha\in [0,1]$, investigates the continuity of the mapping 
\begin{equation*}
  (\mathbf{C}_\mathrm{in},\mathbf{C}_\mathrm{out})\mapsto 
  C^\alpha(\mathbf{C}_\mathrm{in},\mathbf{C}_\mathrm{out}). 
\end{equation*}
In this lemma, $C^\alpha$ may be 
calculated with respect to either maximal- or average-error.
The proof is given in Subsection \ref{subsec:concavity}.
\begin{lem} \label{lem:continuity}
For every $\alpha\in [0,1]$, the mapping 
$C^\alpha(\mathbf{C}_\mathrm{in},\mathbf{C}_\mathrm{out})$ is
concave on $\mathbb{R}^4_{\geq 0}$ 
and thus continuous on $\mathbb{R}^4_{>0}$. 
\end{lem}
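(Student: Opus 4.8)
The plan is to prove that $(\mathbf{C}_\mathrm{in},\mathbf{C}_\mathrm{out})\mapsto C^\alpha(\mathbf{C}_\mathrm{in},\mathbf{C}_\mathrm{out})$ is concave on $\mathbb{R}^4_{\geq 0}$ and then invoke the standard fact that a finite concave function on an open convex set is continuous on that set; since the interior of $\mathbb{R}^4_{\geq 0}$ is $\mathbb{R}^4_{>0}$, continuity there follows at once. Finiteness is not an obstacle: regardless of the CF link capacities, the receiver observes only $Y$, so every achievable rate pair satisfies $R_1+R_2\leq\max_{p(x_1,x_2)}I(X_1,X_2;Y)$, which gives a uniform upper bound on $C^\alpha$, while $C^\alpha\geq 0$ because rates are nonnegative. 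As the capacity regions are closed and bounded, hence compact, the maximum defining $C^\alpha(\mathscr{C})$ in (\ref{eq:Calpha}) is attained.

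For the concavity itself, fix $\alpha\in[0,1]$, two resource vectors $\mathbf{P}^{(0)}=(\mathbf{C}_\mathrm{in}^{(0)},\mathbf{C}_\mathrm{out}^{(0)})$ and $\mathbf{P}^{(1)}=(\mathbf{C}_\mathrm{in}^{(1)},\mathbf{C}_\mathrm{out}^{(1)})$ in $\mathbb{R}^4_{\geq 0}$, and $\theta\in[0,1]$. For $b\in\{0,1\}$ let $(x_b,y_b)$ attain $C^\alpha$ over the capacity region associated with $\mathbf{P}^{(b)}$. The core step is a concatenation (resource-sharing) argument: I would take a code of blocklength $n_0\approx\theta n$ achieving a rate pair near $(x_0,y_0)$ for $\mathbf{P}^{(0)}$ and a code of blocklength $n_1\approx(1-\theta)n$ achieving a rate pair near $(x_1,y_1)$ for $\mathbf{P}^{(1)}$, and run them back-to-back over a single block of length $n=n_0+n_1$. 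Each encoder's message becomes a pair $(m_i^{(0)},m_i^{(1)})$; it transmits the first codeword on channel uses $1,\dots,n_0$ and the second on $n_0+1,\dots,n$, while the CF performs the two cooperation protocols sequentially (using $J_0+J_1$ rounds, still finite). The decoder decodes each half separately.

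The bookkeeping then yields exactly the desired convex combination. The total traffic on CF input link $i$ is at most $n_0 C_\mathrm{in}^{(0),i}+n_1 C_\mathrm{in}^{(1),i}=n\big(\theta C_\mathrm{in}^{(0),i}+(1-\theta)C_\mathrm{in}^{(1),i}\big)$, so the concatenated code respects the capacities of the convex-combination facilitator $\theta\mathbf{P}^{(0)}+(1-\theta)\mathbf{P}^{(1)}$ (and identically for the output links), and the per-block rates combine to $\theta(x_0,y_0)+(1-\theta)(x_1,y_1)$. Hence this rate pair lies in the capacity region of $\theta\mathbf{P}^{(0)}+(1-\theta)\mathbf{P}^{(1)}$, so that
\begin{equation*}
  C^\alpha\big(\theta\mathbf{P}^{(0)}+(1-\theta)\mathbf{P}^{(1)}\big)
  \geq \theta\,C^\alpha(\mathbf{P}^{(0)})+(1-\theta)\,C^\alpha(\mathbf{P}^{(1)}),
\end{equation*}
since $C^\alpha$ applies the linear functional $(x,y)\mapsto\alpha x+(1-\alpha)y$. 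Letting $n\to\infty$, so that $n_0/n\to\theta$, and passing to the closure absorbs the rounding of $n_0,n_1$ and the $\delta$-slack in the rates.

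The step I expect to require the most care is verifying that the concatenation controls the \emph{stricter} error notion, so that the same argument proves concavity for both the average- and maximal-error regions. The key observation is that, because the combined message set is the product $[M_i^{(0)}]\times[M_i^{(1)}]$ and the two blocks are independent (the MAC is memoryless and feedback-free), the error event for any message pair is contained in the union of the two per-block error events, so $\lambda_n\big((m_1^{(0)},m_1^{(1)}),(m_2^{(0)},m_2^{(1)})\big)\leq\lambda_{n_0}^{(0)}+\lambda_{n_1}^{(1)}$. Taking the maximum over the product message set then gives $P_{e,\mathrm{max}}^{(n)}\leq P_{e,\mathrm{max}}^{(0)}+P_{e,\mathrm{max}}^{(1)}$, and averaging gives the analogous bound for $P_{e,\mathrm{avg}}^{(n)}$; in either case, small component errors yield small combined error. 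This is precisely what makes the concavity---and hence the continuity on $\mathbb{R}^4_{>0}$---hold for $C^\alpha$ computed with respect to either error criterion, as the lemma asserts.
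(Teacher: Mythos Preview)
Your proposal is correct and follows essentially the same route as the paper: a time-sharing (code concatenation) argument shows that the capacity region---and hence $C^\alpha$---is concave in $(\mathbf{C}_\mathrm{in},\mathbf{C}_\mathrm{out})$, after which continuity on the open set $\mathbb{R}^4_{>0}$ is the standard corollary. The paper organizes this by first stating set-valued concavity of the capacity region as a separate lemma and then deducing concavity of $C^\alpha$ via the maximizing rate pairs, whereas you fold both steps together, but the underlying construction and the union-bound control of both maximal and average error are the same.
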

By combining the above lemma with (\ref{eq:CalphaULB}), 
it follows that for every
$\alpha\in [0,1]$ and $\mathbf{C}_\mathrm{out}\in \mathbb{R}^2_{>0}$, 
\begin{equation*}
  C^\alpha_\mathrm{max}(\mathbf{C}_\mathrm{in}^*,\mathbf{C}_\mathrm{out})
	=C^\alpha_\mathrm{avg}(\mathbf{C}_\mathrm{in}^*,\mathbf{C}_\mathrm{out}).
\end{equation*}
Since for a given capacity region $\mathscr{C}$, the mapping
$\alpha\mapsto C_\alpha(\mathscr{C})$ characterizes 
$\mathscr{C}$ precisely (see next lemma), for 
every $\mathbf{C}_\mathrm{out}\in \mathbb{R}^2_{>0}$, we have
\begin{equation*}
  \mathscr{C}_\mathrm{max}(\mathbf{C}_\mathrm{in}^*,\mathbf{C}_\mathrm{out})
	=\mathscr{C}_\mathrm{avg}(\mathbf{C}_\mathrm{in}^*,\mathbf{C}_\mathrm{out}).
\end{equation*}
\begin{lem} \label{lem:char}
Let $\mathscr{C}\subseteq\mathbb{R}^2_{\geq 0}$ be non-empty, compact, convex, 
and closed under projections onto the axes, that is, if $(x,y)$ is in $\mathscr{C}$,
then so are $(x,0)$ and $(0,y)$. Then
\begin{equation*}
  \mathscr{C}=\Big\{(x,y)\in\mathbb{R}^2_{\geq 0}\Big|
	\forall \alpha\in [0,1]:\alpha x+(1-\alpha)y\leq C^\alpha\Big\}.
\end{equation*}
\end{lem}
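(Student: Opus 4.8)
The plan is to prove the two inclusions separately, the forward one being immediate and the reverse one resting on a separating-hyperplane argument whose normal vector must be steered into the nonnegative orthant. Write $\mathscr{D}$ for the set on the right-hand side of the claimed identity. The inclusion $\mathscr{C}\subseteq\mathscr{D}$ is immediate from (\ref{eq:Calpha}): if $(x,y)\in\mathscr{C}$ then $(x,y)\in\mathbb{R}^2_{\geq 0}$ and, for each $\alpha\in[0,1]$, we have $\alpha x+(1-\alpha)y\le\max_{(x',y')\in\mathscr{C}}\big(\alpha x'+(1-\alpha)y'\big)=C^\alpha$, so $(x,y)\in\mathscr{D}$.

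For the reverse inclusion $\mathscr{D}\subseteq\mathscr{C}$ I would argue by contraposition: fix $(x_0,y_0)\in\mathbb{R}^2_{\geq 0}\setminus\mathscr{C}$ and produce an $\alpha\in[0,1]$ with $\alpha x_0+(1-\alpha)y_0>C^\alpha$. Before separating, I would extract the geometric consequence of the hypotheses, namely that $\mathscr{C}$ is \emph{downward closed}: for any $(x,y)\in\mathscr{C}$ the four points $(0,0),(x,0),(0,y),(x,y)$ all lie in $\mathscr{C}$ (using closure under axis projections, with $(0,0)$ obtained by projecting twice), so by convexity the entire rectangle $[0,x]\times[0,y]$ lies in $\mathscr{C}$. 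In particular $(0,0)\in\mathscr{C}$, and whenever $(x,y)\in\mathscr{C}$ and $0\le x'\le x$, $0\le y'\le y$, we have $(x',y')\in\mathscr{C}$. Since $\mathscr{C}$ is compact and convex and $(x_0,y_0)\notin\mathscr{C}$, the separating-hyperplane theorem yields $(a,b)\neq(0,0)$ and a constant $c$ with $ax+by\le c<ax_0+by_0$ for every $(x,y)\in\mathscr{C}$.

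The main obstacle is that this separating normal $(a,b)$ need not have nonnegative entries, whereas the directions available in $\mathscr{D}$ are exactly the $(\alpha,1-\alpha)$ with $\alpha\in[0,1]$, all lying in the nonnegative orthant. I would resolve this by a short sign analysis driven by downward closedness together with $x_0,y_0\ge 0$. If $a<0$ and $b\le 0$ (or symmetrically $b<0$ and $a\le 0$), then evaluating the separation at $(0,0)\in\mathscr{C}$ gives $c\ge 0$, while $ax_0+by_0\le 0$, contradicting $ax_0+by_0>c$; hence these cases cannot occur. If $a<0$ and $b>0$, then downward closedness shows that $y_{\max}:=\max_{(x,y)\in\mathscr{C}}y$ is attained at $(0,y_{\max})\in\mathscr{C}$, and evaluating the separation there gives $by_{\max}\le c<ax_0+by_0\le by_0$, so $C^0=y_{\max}<y_0$ and $\alpha=0$ works; the symmetric case $a>0,\,b<0$ gives $\alpha=1$. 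In the remaining case $a\ge 0$, $b\ge 0$ with $(a,b)\neq(0,0)$, I would set $\alpha=a/(a+b)\in[0,1]$ and divide the separation inequality by $a+b>0$ to obtain $\alpha x_0+(1-\alpha)y_0>c/(a+b)\ge C^\alpha$. In every case $(x_0,y_0)\notin\mathscr{D}$, which establishes the contrapositive and, combined with the forward inclusion, yields $\mathscr{C}=\mathscr{D}$. An essentially equivalent packaging would invoke support-function theory directly (as in the cited reference \cite{Schneider}): a compact convex comprehensive set equals the intersection of its supporting halfspaces, and comprehensiveness forces $h_{\mathscr{C}}(u)=h_{\mathscr{C}}(u^+)$ where $u^+$ is the positive-part truncation, so only nonnegative directions, i.e.\ $\alpha\in[0,1]$ after normalization, are needed.
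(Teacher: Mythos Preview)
Your proof is correct and follows essentially the same separating-hyperplane strategy as the paper: establish the trivial inclusion, then for the reverse use a supporting/separating halfspace and correct its normal into the nonnegative orthant using the projection-closure hypothesis. The paper's version is phrased in $\mathbb{R}^k$ and handles the sign issue uniformly by zeroing out the negative coordinates of the normal (exactly the ``positive-part truncation'' you mention at the end), rather than the 2D case analysis you carry out, but the content is the same.
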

This result is well known and continues to hold for subsets of $\mathbb{R}^k_{\geq 0}$ for
any positive integer $k$. For completeness, we state and prove the general result in 
Subsection \ref{subsec:char}.

One question that arises from Theorem \ref{thm:MaxEqAvg} is whether it is necessary for the 
CF to have access to both messages to obtain identical maximal- and average-error capacity
regions. The next proposition shows that the mentioned condition is in fact required, 
that is, if the CF only has partial access to the messages, regardless of the capacities
of the CF output links, the average- and maximal-error regions sometimes differ. 
The proof is given in Subsection \ref{subsec:MaxNeqAvg}. 
\begin{prop} \label{prop:MaxNeqAvg}
There exists a MAC 
$(\mathcal{X}_1\times\mathcal{X}_2,p(y|x_1,x_2),\mathcal{Y})$ 
and $\mathbf{C}_\mathrm{in}\in\mathbb{R}^2_{>0}$
such that for every $\mathbf{C}_\mathrm{out}\in\mathbb{R}^2_{\geq 0}$,
\begin{equation*} 
  \mathscr{C}_\mathrm{max}(\mathbf{C}_\mathrm{in},\mathbf{C}_\mathrm{out})
  \neq
  \mathscr{C}_\mathrm{avg}(\mathbf{C}_\mathrm{in},\mathbf{C}_\mathrm{out}).
\end{equation*}
\end{prop}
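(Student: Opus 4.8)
The plan is to take the channel to be Dueck's Contraction MAC \cite{Dueck}, whose average-error sum-capacity $S_{\mathrm{avg}}$ strictly exceeds its maximal-error sum-capacity $S_{\max}$ in the absence of cooperation; write $\Delta := S_{\mathrm{avg}} - S_{\max} > 0$. I would then fix any $\mathbf{C}_\mathrm{in} = (C_\mathrm{in}^1, C_\mathrm{in}^2) \in \mathbb{R}^2_{>0}$ with $C_\mathrm{in}^1 + C_\mathrm{in}^2 < \Delta$ (for instance $C_\mathrm{in}^1 = C_\mathrm{in}^2 = \Delta/4$) and argue that this choice separates the maximal- and average-error regions for every $\mathbf{C}_\mathrm{out} \in \mathbb{R}^2_{\geq 0}$. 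Conceptually, in contrast with the full-access hypothesis of Theorem \ref{thm:MaxEqAvg}, the total information the CF ever receives about the messages is capped at $n(C_\mathrm{in}^1 + C_\mathrm{in}^2)$ bits, a bottleneck that no amount of output capacity can widen.

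The average-error containment is immediate: since $\mathscr{C}_\mathrm{avg}(\mathbf{C}_\mathrm{in},\mathbf{C}_\mathrm{out}) \supseteq \mathscr{C}_\mathrm{avg}(\mathbf{C}_\mathrm{in},\mathbf{0})$ and the latter equals the average-error capacity region of the MAC alone, the left-hand region contains a rate pair $(R_1,R_2)$ with $R_1 + R_2 = S_{\mathrm{avg}} = S_{\max} + \Delta$. It remains to show that this pair lies outside $\mathscr{C}_\mathrm{max}(\mathbf{C}_\mathrm{in},\mathbf{C}_\mathrm{out})$, which I would do with a converse bounding the maximal-error sum-capacity of the network by $S_{\max} + C_\mathrm{in}^1 + C_\mathrm{in}^2 < S_{\mathrm{avg}}$, uniformly in $\mathbf{C}_\mathrm{out}$.

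The converse rests on revealing to the decoder, as a genie, the CF's entire received information $U = (u_1^J,u_2^J)$. This $U$ is a deterministic function of $(m_1,m_2)$ taking at most $2^{n(C_\mathrm{in}^1+C_\mathrm{in}^2)}$ values, and, crucially, each reply $v_i^J$ is a function of $U$ alone; hence conditioned on $U = u$ the channel inputs $f_i(m_i,v_i^J)$ reduce to product-form functions of $(m_1,m_2)$, i.e.\ an ordinary no-cooperation code on the message pairs with $U(m_1,m_2) = u$. Since the genie costs at most $n(C_\mathrm{in}^1+C_\mathrm{in}^2)$ bits, and since each conditional sub-code still meets the maximal-error constraint, invoking Dueck's contraction at the level of each sub-code should bound the number of reliably decodable pairs carrying a fixed value of $U$ by roughly $2^{nS_{\max}}$. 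Summing over the at most $2^{n(C_\mathrm{in}^1+C_\mathrm{in}^2)}$ values of $U$ yields $\tfrac{1}{n}\log(M_1M_2) \le S_{\max} + C_\mathrm{in}^1 + C_\mathrm{in}^2 + o(1)$, as required. Equivalently, one may first use monotonicity of $\mathscr{C}_\mathrm{max}$ in $\mathbf{C}_\mathrm{out}$ to pass to the regime of large output capacity, where the CF simply relays $U$ and the model collapses to conferencing encoders \cite{WillemsMAC} of capacities $(C_\mathrm{in}^1,C_\mathrm{in}^2)$.

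I expect the genuine difficulty to be this last contraction step for the conditional sub-codes. A naive Fano argument after revealing $U$ is useless here: conditioning on $U$ restores conditionally independent inputs, so the single-letter bound reverts to $\max_{p(x_1)p(x_2)}I(X_1,X_2;Y) = S_{\mathrm{avg}}$ and one only obtains the vacuous $S_{\mathrm{avg}} + C_\mathrm{in}^1 + C_\mathrm{in}^2$. The maximal-error contraction must therefore be applied directly to the restricted-support sub-codes, whose active message pairs need not form a product set. The cleanest way around this, which I would pursue, is to select the Dueck MAC so that its contraction is a purely combinatorial confusability condition manifestly inherited by subsets of message pairs, and to take $\mathbf{C}_\mathrm{in}$ small relative to the resulting gap $\Delta$; this makes the per-sub-code bound a direct consequence of the construction rather than requiring a fresh conferencing converse.
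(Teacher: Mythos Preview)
Your approach is the paper's: Dueck's contraction MAC, $\mathbf{C}_\mathrm{in}$ chosen with $C_\mathrm{in}^1+C_\mathrm{in}^2$ below the avg/max sum-capacity gap, the trivial inclusion $\mathscr{C}_\mathrm{avg}(\mathbf{C}_\mathrm{in},\mathbf{C}_\mathrm{out})\supseteq\mathscr{C}_\mathrm{avg}(\mathbf{0},\mathbf{0})$, and the converse
\[
C^{1/2}_\mathrm{max}(\mathbf{C}_\mathrm{in},\mathbf{C}_\mathrm{out})\le C^{1/2}_\mathrm{max}(\mathbf{0},\mathbf{0})+\tfrac{1}{2}(C_\mathrm{in}^1+C_\mathrm{in}^2)
\]
uniformly in $\mathbf{C}_\mathrm{out}$. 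The paper obtains this last inequality by reference to the pigeonhole argument in the proof of Proposition~\ref{prop:conf}.

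Your final-paragraph worry is unfounded and the proposed workaround unnecessary. The preimages $\{(m_1,m_2):U(m_1,m_2)=u\}$ \emph{are} product sets: this is the standard rectangle property of deterministic interactive protocols, and the CF model is no exception (induct on rounds, using that $u_{ij}$ depends only on $m_i$ and the transcript so far). The paper uses exactly this fact, citing the analogous step for conferencing (Willems' Equation~(19)). Once you have rectangles, pigeonhole over the at most $2^{n(C_\mathrm{in}^1+C_\mathrm{in}^2)}$ values of $U$ yields a product set $\mathcal{A}_1^*\times\mathcal{A}_2^*$ of size at least $M_1M_2\cdot 2^{-n(C_\mathrm{in}^1+C_\mathrm{in}^2)}$ on which each encoder depends only on its own message and the maximal error is at most that of the original code; the definition of $S_{\max}$ then gives the bound directly. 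No per-sub-code strong converse and no combinatorial refinement of Dueck's channel are needed.
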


\subsection{Negligible Cooperation} \label{subsec:negligible}

We begin by giving a rough description of the capacity region of 
a network containing edges of negligible capacity (Section \ref{sec:intro}). 
Let $\mathcal{N}$ be a memoryless network containing at 
most a single edge of negligible capacity and possibly other edges of positive capacity. 
For every $\delta>0$, let $\mathcal{N}(\delta)$ be the same network with the difference
that the edge with negligible capacity is replaced with an edge of capacity $\delta$. 
(See Figure \ref{fig:negligible}.) Then we say a rate vector 
is achievable over $\mathcal{N}$ if and only if for all $\delta>0$, that 
rate vector is achievable over $\mathcal{N}(\delta)$. Formally, if we denote
the capacity regions of $\mathcal{N}$ and $\mathcal{N}(\delta)$ with 
$\mathscr{C}(\mathcal{N})$ and $\mathscr{C}(\mathcal{N}(\delta))$, respectively, then 
\begin{equation*}
\mathscr{C}(\mathcal{N})=\bigcap_{\delta>0}
\mathscr{C}(\mathcal{N}(\delta)).
\end{equation*}
We define achievability over networks with 
multiple edges of negligible capacity inductively.  
\begin{figure} 
	\begin{center}
		\includegraphics[scale=0.3]{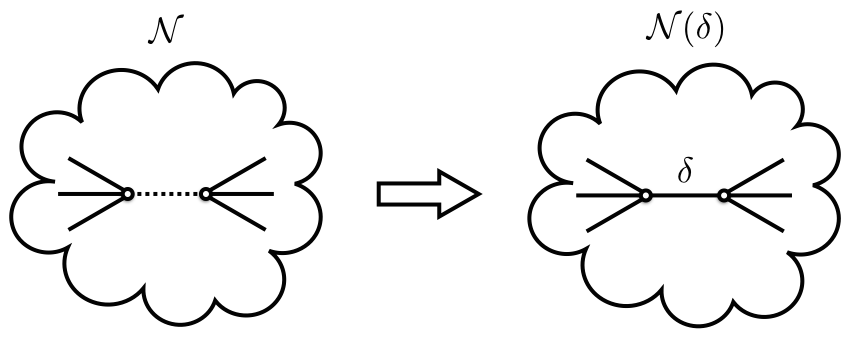}
		\caption{Left: A network $\mathcal{N}$ with a single edge of ``negligible 
		capacity.'' Right: The network $\mathcal{N}(\delta)$, where the negligible
		capacity edge of $\mathcal{N}$ is replaced with an edge of capacity $\delta>0$.} 
		\label{fig:negligible}
	\end{center}
\end{figure} 
 
Based on the above discussion, we define the capacity region 
of a MAC with a CF that has complete access to both messages 
and output edges of negligible capacity as
\begin{equation*}
  \bigcap_{\mathbf{C}_\mathrm{out}\in\mathbb{R}^2_{>0}}
	\mathscr{C}(\mathbf{C}_\mathrm{in}^*,\mathbf{C}_\mathrm{out}),
\end{equation*}
where in the intersection we place either $\mathscr{C}_\mathrm{avg}$ or 
$\mathscr{C}_\mathrm{max}$ depending on whether the average- or maximal-error
capacity region is desired.
From Theorem \ref{thm:MaxEqAvg} it follows that for every MAC,
\begin{equation} \label{eq:negEdgeMaxAvg}
  \bigcap_{\mathbf{C}_\mathrm{out}\in\mathbb{R}_{>0}^2}
	\mathscr{C}_\mathrm{max}(\mathbf{C}_\mathrm{in}^*,\mathbf{C}_\mathrm{out})
	=\bigcap_{\mathbf{C}_\mathrm{out}\in\mathbb{R}_{>0}^2}
	\mathscr{C}_\mathrm{avg}(\mathbf{C}_\mathrm{in}^*,\mathbf{C}_\mathrm{out})
	\supseteq \mathscr{C}_\mathrm{avg}(\mathbf{0},\mathbf{0}),
\end{equation}
where $\mathbf{0}=(0,0)$. Thus even negligible cooperation increases
reliability. That is, a negligible cooperation rate suffices to guarantee 
a small maximal probability of error for rate pairs that without cooperation,
may only be achievable with small average probability of error. 

The reliability gain of negligible cooperation is closely related to the question
of the continuity of the capacity region of a network with respect to its edges.
Using the ideas discussed above, Proposition \ref{prop:discontinuity} provides conditions 
under which  $C^\alpha_\mathrm{max}(\mathbf{C}_\mathrm{in},\mathbf{C}_\mathrm{out})$ is not 
continuous with respect to $\mathbf{C}_\mathrm{out}$. The proof is given in 
Subsection \ref{subsec:discontinuity}.
\begin{prop} \label{prop:discontinuity}
Fix $\alpha\in (0,1)$ and $\mathbf{C}_\mathrm{in}\in\mathbb{R}^2_{>0}$. 
Given any MAC for which
\begin{equation} \label{eq:CalphaZeroOne}
C_\mathrm{avg}^\alpha(\mathbf{0},\mathbf{0})>C_\mathrm{max}^\alpha(\mathbf{0},\mathbf{0}),
\end{equation}
$C^\alpha_\mathrm{max}(\mathbf{C}_\mathrm{in},\mathbf{C}_\mathrm{out})$ is not 
continuous with respect to $\mathbf{C}_\mathrm{out}$ at $\mathbf{C}_\mathrm{out}=\mathbf{0}$.
\end{prop}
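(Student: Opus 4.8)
The plan is to exhibit a genuine jump: I would show that as $\mathbf{C}_\mathrm{out}$ decreases to $\mathbf{0}$ through strictly positive vectors, $C^\alpha_\mathrm{max}(\mathbf{C}_\mathrm{in},\mathbf{C}_\mathrm{out})$ stays at or above $C^\alpha_\mathrm{avg}(\mathbf{0},\mathbf{0})$, whereas its value \emph{at} $\mathbf{C}_\mathrm{out}=\mathbf{0}$ collapses to $C^\alpha_\mathrm{max}(\mathbf{0},\mathbf{0})$. Since hypothesis (\ref{eq:CalphaZeroOne}) separates these two numbers, the map cannot be continuous at $\mathbf{0}$. The whole argument reduces to the chain
\begin{equation*}
  \lim_{\mathbf{C}_\mathrm{out}\to\mathbf{0}^+}C^\alpha_\mathrm{max}(\mathbf{C}_\mathrm{in},\mathbf{C}_\mathrm{out})
  \geq C^\alpha_\mathrm{avg}(\mathbf{0},\mathbf{0})
  > C^\alpha_\mathrm{max}(\mathbf{0},\mathbf{0})
  = C^\alpha_\mathrm{max}(\mathbf{C}_\mathrm{in},\mathbf{0}).
\end{equation*}

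The rightmost equality is the easy end. When $\mathbf{C}_\mathrm{out}=\mathbf{0}$ the constraint $\log|\mathcal{V}_i^J|\leq nC_\mathrm{out}^i=0$ forces each $\mathcal{V}_i^J$ to be a singleton, so $v_i^J$ is a constant and $f_i(m_i,v_i^J)$ depends on $m_i$ alone; the CF is inert no matter how large $\mathbf{C}_\mathrm{in}$ is, and the code is an ordinary MAC code. Hence $\mathscr{C}_\mathrm{max}(\mathbf{C}_\mathrm{in},\mathbf{0})=\mathscr{C}_\mathrm{max}(\mathbf{0},\mathbf{0})$, and applying $C^\alpha$ gives the claimed equality.

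The core step is the leftmost inequality, which I would establish uniformly in $\mathbf{C}_\mathrm{out}$: for each fixed $\mathbf{C}_\mathrm{out}\in\mathbb{R}^2_{>0}$ I claim $C^\alpha_\mathrm{max}(\mathbf{C}_\mathrm{in},\mathbf{C}_\mathrm{out})\geq C^\alpha_\mathrm{avg}(\mathbf{0},\mathbf{0})$. This is exactly the content delivered by the preceding theorems: using Theorem \ref{thm:main} with the cooperation-free baseline $(\mathbf{0},\mathbf{0})$ and with the error-order parameters $(\tilde r_1,\tilde r_2)$ taken large enough that $(\tilde r_1,\tilde r_2)$-error coincides with maximal error, every rate pair in $\mathscr{C}_\mathrm{avg}(\mathbf{0},\mathbf{0})$ becomes maximal-error achievable over the cooperative network provided the input links can name the block of $\Lambda_n$ occupied by the message pair and the output links carry the CF's $O(\log n)$ steering reply, the latter costing only a vanishing rate and hence requiring nothing beyond $\mathbf{C}_\mathrm{out}>\mathbf{0}$. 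Taking the support function in the direction $(\alpha,1-\alpha)$ yields the stated bound; in the full-access regime it is cleanest to read it off directly from Theorem \ref{thm:MaxEqAvg} together with monotonicity of $\mathscr{C}_\mathrm{avg}$ in the link capacities, namely $C^\alpha_\mathrm{max}(\mathbf{C}_\mathrm{in},\mathbf{C}_\mathrm{out})=C^\alpha_\mathrm{avg}(\mathbf{C}_\mathrm{in},\mathbf{C}_\mathrm{out})\geq C^\alpha_\mathrm{avg}(\mathbf{0},\mathbf{0})$. Because the bound does not depend on $\mathbf{C}_\mathrm{out}$, it passes to the limit $\mathbf{C}_\mathrm{out}\to\mathbf{0}^+$ and completes the chain.

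I expect the main obstacle to be precisely this uniform lower bound, and specifically the input-rate budget it consumes. The reliability conversion in Theorem \ref{thm:main} requires each encoder to announce its block index before the CF can redirect it to a low-error entry, so the input capacity used grows with the rate being protected; the delicate point is to verify that $\mathbf{C}_\mathrm{in}$ is large enough to support this announcement for rates approaching the point that attains $C^\alpha_\mathrm{avg}(\mathbf{0},\mathbf{0})$, while the output side pays only the negligible $O(\log n)$ steering cost. Once that input condition is secured---exactly as in the proof of Theorem \ref{thm:MaxEqAvg}---reducing $\mathbf{C}_\mathrm{out}$ toward zero never erodes the bound, and the value drops only at the instant $\mathbf{C}_\mathrm{out}$ reaches $\mathbf{0}$ and the CF becomes inert, which is the source of the discontinuity.
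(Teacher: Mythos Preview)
Your chain
\[
\lim_{\mathbf{C}_\mathrm{out}\to\mathbf{0}^+}C^\alpha_\mathrm{max}(\mathbf{C}_\mathrm{in},\mathbf{C}_\mathrm{out})
\geq C^\alpha_\mathrm{avg}(\mathbf{0},\mathbf{0})
> C^\alpha_\mathrm{max}(\mathbf{0},\mathbf{0})
= C^\alpha_\mathrm{max}(\mathbf{C}_\mathrm{in},\mathbf{0})
\]
is the right skeleton, and your handling of the rightmost equality is fine. The gap is exactly where you anticipate it, and it is not merely ``delicate'': for an arbitrary $\mathbf{C}_\mathrm{in}\in\mathbb{R}^2_{>0}$ the input-link condition you need simply fails in general. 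To push $(\tilde r_1,\tilde r_2)$-error all the way to maximal error via Theorem~\ref{thm:main} (starting from the baseline $(\mathbf{0},\mathbf{0})$) you need $\tilde C_\mathrm{in}^i>\min\{\tilde r_i,R_i^*\}$ with $\tilde r_i$ at least the rate of encoder $i$; taking the supremum over the region forces $\tilde C_\mathrm{in}^i>R_i^*$. Likewise, invoking Theorem~\ref{thm:MaxEqAvg} directly requires $\min\{C_\mathrm{in}^1,C_\mathrm{in}^2\}>\max_{p(x_1,x_2)}I(X_1,X_2;Y)$. Neither is implied by $\mathbf{C}_\mathrm{in}\in\mathbb{R}^2_{>0}$, so your leftmost inequality is not justified as written; your phrase ``once that input condition is secured'' hides the fact that for small $\mathbf{C}_\mathrm{in}$ it cannot be secured.

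The paper closes this gap with a scaling-plus-concavity step you are missing. One picks $\lambda\in(0,1)$ so that $\mathbf{C}_\mathrm{in}^*:=\mathbf{C}_\mathrm{in}/\lambda$ does satisfy the full-access hypothesis of Theorem~\ref{thm:MaxEqAvg}; your chain then goes through verbatim for $\mathbf{C}_\mathrm{in}^*$, establishing discontinuity of $\mathbf{C}_\mathrm{out}\mapsto C^\alpha_\mathrm{max}(\mathbf{C}_\mathrm{in}^*,\mathbf{C}_\mathrm{out})$ at $\mathbf{0}$. The transfer back to the given $\mathbf{C}_\mathrm{in}$ uses the concavity of $C^\alpha_\mathrm{max}$ in the link capacities (Lemma~\ref{lem:continuity}): writing $\mathbf{C}_\mathrm{in}=\lambda\mathbf{C}_\mathrm{in}^*+(1-\lambda)\mathbf{0}$ gives
\[
C^\alpha_\mathrm{max}(\mathbf{C}_\mathrm{in},\mathbf{C}_\mathrm{out})-C^\alpha_\mathrm{max}(\mathbf{0},\mathbf{0})
\geq \lambda\bigl(C^\alpha_\mathrm{max}(\mathbf{C}_\mathrm{in}^*,\mathbf{C}_\mathrm{out})-C^\alpha_\mathrm{max}(\mathbf{0},\mathbf{0})\bigr),
\]
so the jump at $\mathbf{C}_\mathrm{out}=\mathbf{0}$ for $\mathbf{C}_\mathrm{in}^*$ forces a jump (of at least $\lambda$ times the size) for $\mathbf{C}_\mathrm{in}$. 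Adding this step to your argument completes the proof.
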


In Subsection \ref{subsec:contractionMAC}, we show that Dueck's contraction MAC \cite{Dueck} 
is an example of a MAC that satisfies (\ref{eq:CalphaZeroOne}) for 
\emph{every} $\alpha\in (0,1)$. This results in the next corollary. 
\begin{cor} \label{cor:discontinuity}
There exists a MAC where for all $\mathbf{C}_\mathrm{in}\in\mathbb{R}_{>0}^2$ and 
$\alpha\in (0,1)$, 
$C^\alpha_\mathrm{max}(\mathbf{C}_\mathrm{in},\mathbf{C}_\mathrm{out})$
is not continuous with respect to $\mathbf{C}_\mathrm{out}$ at 
$\mathbf{C}_\mathrm{out}=\mathbf{0}$. 
\end{cor}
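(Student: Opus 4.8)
The plan is to read the corollary as a direct instantiation of Proposition \ref{prop:discontinuity}. That proposition already delivers, for \emph{every} $\mathbf{C}_\mathrm{in}\in\mathbb{R}^2_{>0}$ and every fixed $\alpha\in(0,1)$, the discontinuity of $C^\alpha_\mathrm{max}(\mathbf{C}_\mathrm{in},\mathbf{C}_\mathrm{out})$ at $\mathbf{C}_\mathrm{out}=\mathbf{0}$, provided the underlying MAC satisfies the strict inequality (\ref{eq:CalphaZeroOne}) at that $\alpha$. Hence the entire content of the corollary reduces to exhibiting a single MAC for which (\ref{eq:CalphaZeroOne}) holds \emph{simultaneously} for all $\alpha\in(0,1)$; the claim then follows by applying Proposition \ref{prop:discontinuity} once for each such $\alpha$ and each $\mathbf{C}_\mathrm{in}$. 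I would take this MAC to be Dueck's contraction MAC \cite{Dueck}, and defer its analysis to Subsection \ref{subsec:contractionMAC}.

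To carry out that analysis I would first determine the average- and maximal-error capacity regions $\mathscr{C}_\mathrm{avg}(\mathbf{0},\mathbf{0})$ and $\mathscr{C}_\mathrm{max}(\mathbf{0},\mathbf{0})$ of the contraction MAC explicitly. Two features are decisive. First, the two regions agree on the axes: the maximum of $R_1$ (respectively $R_2$) is the same under maximal and average error, because maximizing a single user's rate reduces the MAC to a point-to-point channel (the other user sending a fixed symbol), and for such a channel the maximal- and average-error capacities coincide. In the notation of $C^\alpha$ this reads $C^0_\mathrm{avg}=C^0_\mathrm{max}$ and $C^1_\mathrm{avg}=C^1_\mathrm{max}$. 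Second, Dueck's theorem supplies a strict gap in sum-capacity, i.e. $C^{1/2}_\mathrm{avg}(\mathbf{0},\mathbf{0})>C^{1/2}_\mathrm{max}(\mathbf{0},\mathbf{0})$.

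The remaining task is to upgrade the single gap at $\alpha=1/2$ to a gap at \emph{every} interior $\alpha$. I would do this by exploiting the precise shape of the two regions: for the contraction MAC both $\mathscr{C}_\mathrm{avg}(\mathbf{0},\mathbf{0})$ and $\mathscr{C}_\mathrm{max}(\mathbf{0},\mathbf{0})$ are compact, downward-closed convex pentagons sharing the same individual-rate constraints $R_1\le a$ and $R_2\le b$ (with $a=C^1$ and $b=C^0$, by the axis agreement above), and differing only in a single sum-rate face $R_1+R_2\le c$, with $c_\mathrm{avg}>c_\mathrm{max}$. Computing the support function of such a pentagon in the direction $(\alpha,1-\alpha)$, the maximizing vertex is $(c-b,b)$ for $\alpha\le 1/2$ and $(a,c-a)$ for $\alpha\ge 1/2$, so that
\begin{equation*}
  C^\alpha_\mathrm{avg}(\mathbf{0},\mathbf{0})-C^\alpha_\mathrm{max}(\mathbf{0},\mathbf{0})
  =\min\{\alpha,1-\alpha\}\,(c_\mathrm{avg}-c_\mathrm{max}),
\end{equation*}
which is strictly positive for every $\alpha\in(0,1)$. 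This verifies (\ref{eq:CalphaZeroOne}) for all interior $\alpha$ and, through Proposition \ref{prop:discontinuity}, completes the proof.

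The main obstacle is the structural step in the previous paragraph: Dueck's original result concerns only the sum-capacity, i.e. the direction $\alpha=1/2$, so establishing the gap uniformly in $\alpha$ requires characterizing enough of the maximal-error region to guarantee that the gap does not close in any other direction. Concretely, the work is to show that the maximal-error region is obtained from the average-error region by lowering only the sum-rate (slope $-1$) face while leaving the two single-user maxima untouched; once this pentagon description is in hand, the support-function comparison above is routine. A secondary point worth checking is that both regions are genuinely comprehensive and compact, so that Lemma \ref{lem:char} and the support-function computation apply without qualification.
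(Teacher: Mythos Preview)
Your high-level plan is exactly the paper's: invoke Proposition~\ref{prop:discontinuity} with Dueck's contraction MAC, so that the corollary reduces to verifying (\ref{eq:CalphaZeroOne}) for every $\alpha\in(0,1)$. The gap is in how you propose to carry out that verification.

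Your pentagon description of the two regions is incorrect. Dueck's outer bound on $\mathscr{C}_\mathrm{max}(\mathbf{0},\mathbf{0})$ is the set of $(R_1,R_2)$ with $R_1\le\log 3-p$ and $R_2\le h(p)$ for some $p\in[0,1/2]$; its upper-right boundary is the smooth curve $p\mapsto(\log 3-p,\,h(p))$, not a single slope $-1$ face. Correspondingly the support function is obtained by maximizing $\alpha(\log 3-p)+(1-\alpha)h(p)$ over $p$, and the optimizer $p^*=(1+2^{\alpha/(1-\alpha)})^{-1}$ varies continuously with $\alpha$, which is incompatible with a pentagon. The average-error region is likewise not a pentagon: the maximum of $R_1$ is $\log 3$ (attained with $X_2$ constant), which is not on the boundary of any single pentagon from a product distribution with $H(X_2)>0$, so the time-sharing closure produces a curved boundary. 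Your formula $C^\alpha_\mathrm{avg}-C^\alpha_\mathrm{max}=\min\{\alpha,1-\alpha\}(c_\mathrm{avg}-c_\mathrm{max})$ therefore has no basis here, and the argument collapses.

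The paper avoids any exact description of either region. It pairs Dueck's curved outer bound on the maximal-error side, yielding $C^\alpha_\mathrm{max}(\mathbf{0},\mathbf{0})\le(1-\alpha)h(p^*)+\alpha(\log 3-p^*)$, with an explicit inner bound on the average-error side obtained by evaluating the MAC mutual-information expressions at a well-chosen product distribution ($p_A=p_B=1/3$, $p_a=p_b=1/6$, $q=p^*$), yielding $C^\alpha_\mathrm{avg}(\mathbf{0},\mathbf{0})\ge(1-\alpha)h(p^*)+\alpha(\log 3-p^*/3)$. Subtracting gives a gap of $2\alpha p^*/3$, which is strictly positive for every $\alpha\in(0,1)$. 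The key idea you are missing is that one needs neither region exactly: a parametrized outer bound for one and a single cleverly chosen test distribution for the other suffice, provided the two are compared at the same parameter value~$p^*$.
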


For the average-error capacity region of the MAC, less is known. For some MACs and all
$\mathbf{C}_\mathrm{in}\in\mathbb{R}^2_{>0}$, the directional derivative of 
$C^{1/2}_\mathrm{avg}(\mathbf{C}_\mathrm{in},\mathbf{C}_\mathrm{out})$ 
at $\mathbf{C}_\mathrm{out}=\mathbf{0}$ equals infinity for all unit vectors in $\mathbb{R}_{>0}^2$
\cite{kUserMACLong}. The question of whether 
$C^{1/2}_\mathrm{avg}(\mathbf{C}_\mathrm{in},\mathbf{C}_\mathrm{out})$, for a fixed
$C_\mathrm{in}\in\mathbb{R}_{>0}^2$, is continuous with respect to $\mathbf{C}_\mathrm{out}$
at $\mathbf{C}_\mathrm{out}=\mathbf{0}$ for such MACs remains open. 

\subsection{The Conferencing Encoders Model} \label{subsec:confRel}

In this subsection, we study the reliability benefit of cooperation under
the conferencing encoders model \cite{WillemsMAC}, 
in addition to the maximal- and average-error capacity regions of the
MAC with conferencing. 

Theorem \ref{thm:main} quantifies the reliability benefit of cooperation via a CF.
The next proposition does the same for cooperation via conferencing \cite{WillemsMAC}. 
The proof is given Subsection \ref{subsec:confReliability}.
\begin{prop}[Reliability under conferencing] \label{prop:confReliability}
Fix $(C_{12},C_{21})\in\mathbb{R}^2_{\geq 0}$. Then for any MAC with 
$(C_{12},C_{21})$-conferencing, the average- and $(C_{12},C_{21})$-error 
capacity regions are identical. 
\end{prop}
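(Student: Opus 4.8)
The plan is to establish the two inclusions separately; one of them is immediate. For any code, minimizing the block maximum over relabelings never falls below the global average, so $P_e^{(n)}(C_{12},C_{21})\ge P_e^{(n)}(0,0)=P_{e,\mathrm{avg}}^{(n)}$. Hence a small $(C_{12},C_{21})$-error forces the $K_1L_1\times K_2L_2$ submatrix of used messages to have small average error, and restricting to it yields an average-error code of essentially the same rate. Thus the $(C_{12},C_{21})$-error capacity region is contained in the average-error region, and the real content is the reverse inclusion.

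For the reverse inclusion I would mirror the proof of Theorem \ref{thm:main}, replacing the facilitator by the direct encoder-to-encoder links. Starting from an average-error $(C_{12},C_{21})$-conferencing code achieving $(R_1,R_2)$, I would form $\Lambda_n$, partition it into $2^{nC_{12}}\times 2^{nC_{21}}$ blocks, and build the associated $(0,1)$-matrix whose $(k_1,k_2)$ entry is $0$ precisely when the corresponding block of $\Lambda_n$ has small average. A Markov bound on the small average error shows this matrix has few ones, so Lemma \ref{lem:matrix} furnishes permutations of the rows and columns after which every $n\times n$ super-block contains a zero, that is, a block of $\Lambda_n$ with small average error.

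The decisive difference from the CF setting—and the reason no increase in the cooperation rates is needed—lies in how the encoders use this shared combinatorial structure. Writing each message as a super-block index together with a fine index, in a single round encoder $1$ sends its super-block index to encoder $2$ and encoder $2$ sends its super-block index to encoder $1$, of at most $nC_{12}-\log n$ and $nC_{21}-\log n$ bits respectively, so both transmissions fit within the conferencing budget. Now both encoders know the joint super-block, so each can locally evaluate the same zero-locating function guaranteed by Lemma \ref{lem:matrix}; unlike the CF, no reply link is used, so there is no analogue of an increase in $C_\mathrm{out}^i$. The resulting common super-block-plus-zero location plays the role of the common message in Willems' conferencing achievability \cite{WillemsMAC}, and the encoders transmit with a MAC-with-common-message code steered onto that good block. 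Steering spends only the $\log n$ within-super-block bits, costing $o(1)$ in rate, so the scheme still achieves $(R_1,R_2)$ up to $\delta$.

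The step I expect to be the main obstacle is the bookkeeping needed to certify that the constructed code has small $P_e^{(n)}(C_{12},C_{21})$ while the conferencing rates remain exactly $(C_{12},C_{21})$. Since the super-block index has rate only $C_{12}-\tfrac{\log n}{n}$ (resp. $C_{21}-\tfrac{\log n}{n}$), the naive construction controls the error at level $(C_{12}-\tfrac{\log n}{n},C_{21}-\tfrac{\log n}{n})$, and the fine blocks of the new code are proper sub-rectangles of the good blocks—and a sub-rectangle of a block with small average need not itself have small average. Closing this gap, either by passing to the limit $n\to\infty$ in tandem with the $\epsilon,\delta$ slack in the definition of achievability, or by refining the notion of a good block so that the steering can also be carried out at the fine level, is where care is required; the remainder is a direct transcription of the Theorem \ref{thm:main} machinery to the conferencing model.
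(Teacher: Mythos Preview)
Your outline is more complicated than needed, and the obstacle you flag is an artifact of that overcomplication rather than an intrinsic difficulty. The paper's proof bypasses Lemma~\ref{lem:matrix} entirely by first invoking a structural result of Willems~\cite{WillemsMAC}: any rate pair in the average-error conferencing region is achieved by a code in which the single conferencing round simply forwards message prefixes, so the encoders already have the form $f_i(k_1,k_2,\ell_i)$ with $k_1\in[K_1]$, $K_1\approx 2^{nC_{12}}$, $k_2\in[K_2]$, $K_2\approx 2^{nC_{21}}$. After conferencing, \emph{each encoder knows the full pair $(k_1,k_2)$}, not merely a coarser super-block index. Hence any injective remapping $\varphi:[K_1/2]\times[K_2/2]\to S$ can be computed locally by both encoders. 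The paper takes $S$ to be the $K_1K_2/4$ pairs $(k_1,k_2)$ with smallest block-average error; a Markov argument gives every block in $S$ average error at most $4\epsilon/3$, and the new code has $P_e^{(n)}(C_{12},C_{21})\le 4\epsilon/3$ with only a $1/n$ rate loss. No permutation lemma, no super-blocks, no $\log n$ slack.

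Your scheme sends only the super-block index over each conferencing link, deliberately leaving $\log n$ bits unused so that the zero-locating step can be absorbed later. But in the conferencing model there is no ``reply'' link whose budget must be protected; unlike the CF setting of Theorem~\ref{thm:main}, each encoder can afford to send its entire $k_i$, after which the need for Lemma~\ref{lem:matrix} disappears. Once you see that both encoders share $(k_1,k_2)$, the bookkeeping issue you describe---controlling error at level $(C_{12},C_{21})$ rather than $(C_{12}-\tfrac{\log n}{n},C_{21}-\tfrac{\log n}{n})$, and the sub-rectangle averaging problem---vanishes, because the $K_i'=\lfloor 2^{nC_{1i}}\rfloor$ blocks of the new code line up exactly with the definition of $(C_{12},C_{21})$-error.
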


Let $(C_{12},C_{21})\in\mathbb{R}^2_{\geq 0}$
and $\mathscr{C}_\mathrm{conf}(C_{12},C_{21})$ denote the maximal- or
average-error capacity region of the MAC with $(C_{12},C_{21})$-conferencing. Then
for every $(C_{12},C_{21})\in\mathbb{R}^2_{\geq 0}$, 
\begin{equation} \label{eq:confCF}
  \mathscr{C}_\mathrm{conf}(C_{12},C_{21})
	=\mathscr{C}(\mathbf{C}_\mathrm{in},\mathbf{C}_\mathrm{out}),
\end{equation} 
where 
\begin{align}
  \mathbf{C}_\mathrm{in} &= (C_{12},C_{21}) \label{eq:CinConf}\\
  \mathbf{C}_\mathrm{out} &= (C_{21},C_{12}). \label{eq:CoutConf}
\end{align}
Equation (\ref{eq:confCF}) follows from the fact that for a CF for which the 
output link capacity to each encoder is at least as large as the input link
capacity from the other encoder, the strategy where the CF forwards its received 
information from one encoder to the other is optimal. 
Combining Proposition \ref{prop:MaxNeqAvg} with (\ref{eq:confCF})
implies that direct cooperation via conferencing does not necessarily lead 
to identical maximal- and average-error capacity regions.  This is stated 
formally in the next corollary.
\begin{cor}
There exists a MAC and $(C_{12},C_{21})\in\mathbb{R}^2_{>0}$ such that
\begin{equation*}
  \mathscr{C}_\mathrm{conf,max}(C_{12},C_{21})\neq
  \mathscr{C}_\mathrm{conf,avg}(C_{12},C_{21}).
\end{equation*}
\end{cor}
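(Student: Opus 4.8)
The plan is to derive this corollary directly from Proposition \ref{prop:MaxNeqAvg} by translating it through the conferencing--CF correspondence (\ref{eq:confCF}). Proposition \ref{prop:MaxNeqAvg} supplies a MAC together with some $\mathbf{C}_\mathrm{in}=(C_\mathrm{in}^1,C_\mathrm{in}^2)\in\mathbb{R}^2_{>0}$ for which the maximal- and average-error CF capacity regions differ \emph{for every} $\mathbf{C}_\mathrm{out}\in\mathbb{R}^2_{\geq 0}$. The key observation is that the conferencing-to-CF map (\ref{eq:CinConf})--(\ref{eq:CoutConf}) lets us realize this particular $\mathbf{C}_\mathrm{in}$, while the universal quantifier over $\mathbf{C}_\mathrm{out}$ in Proposition \ref{prop:MaxNeqAvg} absorbs whatever output vector the map then forces on us.

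First I would fix the MAC and $\mathbf{C}_\mathrm{in}$ guaranteed by Proposition \ref{prop:MaxNeqAvg} and set the conferencing rates to $(C_{12},C_{21})=(C_\mathrm{in}^1,C_\mathrm{in}^2)$. Since $\mathbf{C}_\mathrm{in}\in\mathbb{R}^2_{>0}$, this pair lies in $\mathbb{R}^2_{>0}$, as required by the statement. Under (\ref{eq:CinConf})--(\ref{eq:CoutConf}), the associated CF then has input vector $(C_{12},C_{21})=\mathbf{C}_\mathrm{in}$---exactly the vector from Proposition \ref{prop:MaxNeqAvg}---and output vector $\mathbf{C}_\mathrm{out}=(C_{21},C_{12})=(C_\mathrm{in}^2,C_\mathrm{in}^1)$.

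Next I would verify that the correspondence (\ref{eq:confCF}) is valid for this configuration. Its hypothesis is that each CF output link has capacity at least that of the input link coming from the other encoder, that is, $C_\mathrm{out}^1\geq C_\mathrm{in}^2$ and $C_\mathrm{out}^2\geq C_\mathrm{in}^1$. With the values above, these read $C_{21}\geq C_{21}$ and $C_{12}\geq C_{12}$, which hold with equality, so (\ref{eq:confCF}) applies to both the maximal- and the average-error regions simultaneously. It then remains only to chain the equalities: applying Proposition \ref{prop:MaxNeqAvg} at the output vector $\mathbf{C}_\mathrm{out}=(C_\mathrm{in}^2,C_\mathrm{in}^1)$ gives $\mathscr{C}_\mathrm{max}(\mathbf{C}_\mathrm{in},\mathbf{C}_\mathrm{out})\neq\mathscr{C}_\mathrm{avg}(\mathbf{C}_\mathrm{in},\mathbf{C}_\mathrm{out})$, and (\ref{eq:confCF}) identifies the two sides with $\mathscr{C}_\mathrm{conf,max}(C_{12},C_{21})$ and $\mathscr{C}_\mathrm{conf,avg}(C_{12},C_{21})$, respectively, whence the conferencing regions differ.

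The argument is essentially a one-line deduction once the machinery is in place; the only point demanding care---and the spot I expect to be the sole obstacle---is confirming that the output vector forced by the conferencing map still falls within the scope of Proposition \ref{prop:MaxNeqAvg}. This is precisely why the ``for every $\mathbf{C}_\mathrm{out}$'' strength of that proposition is essential: a weaker version asserting inequality for only some output vector might fail to cover the particular vector $(C_\mathrm{in}^2,C_\mathrm{in}^1)$ that (\ref{eq:CoutConf}) produces, and the deduction would break down.
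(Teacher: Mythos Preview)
Your proposal is correct and follows exactly the route the paper indicates: combine Proposition~\ref{prop:MaxNeqAvg} with the conferencing--CF correspondence (\ref{eq:confCF})--(\ref{eq:CoutConf}), setting $(C_{12},C_{21})=\mathbf{C}_\mathrm{in}$ and invoking the proposition at the particular output vector $(C_\mathrm{in}^2,C_\mathrm{in}^1)$. Your extra verification of the forwarding condition is harmless but not strictly needed, since the paper asserts (\ref{eq:confCF}) for every $(C_{12},C_{21})\in\mathbb{R}^2_{\geq 0}$ with the specific $\mathbf{C}_\mathrm{in},\mathbf{C}_\mathrm{out}$ of (\ref{eq:CinConf})--(\ref{eq:CoutConf}).
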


We next study the continuity of the conferencing capacity region 
with respect to the capacities of the conferencing links. 
For every $\alpha\in [0,1]$, define
\begin{equation} \label{eq:CF2Conf}
  C^\alpha_\mathrm{conf}(C_{12},C_{21})
	=C^\alpha(\mathbf{C}_\mathrm{in},\mathbf{C}_\mathrm{out}),
\end{equation}
where
$\mathbf{C}_\mathrm{in}$ and $\mathbf{C}_\mathrm{out}$ are given by
(\ref{eq:CinConf}) and (\ref{eq:CoutConf}).
Our next result considers the continuity of $C^\alpha_\mathrm{conf}$
for various values of $\alpha\in [0,1]$. 
\begin{prop} \label{prop:conf}
For every MAC $(\mathcal{X}_1\times\mathcal{X}_2,p(y|x_1,x_2),\mathcal{Y})$, 
the following statements are true. 

(a) For every $\alpha\in [0,1]$, $C^\alpha_\mathrm{conf,avg}$
is continuous on $\mathbb{R}^2_{\geq 0}$.

(b) For every $\alpha\in [0,1]$, $C^\alpha_\mathrm{conf,max}$
is continuous on $\mathbb{R}^2_{>0}$, and for $\alpha=1/2$, 
$C^{1/2}_\mathrm{conf,max}$ is continuous at the point $(0,0)$.
\end{prop}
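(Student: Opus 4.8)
The plan is to handle the three assertions in turn, using throughout the identifications (\ref{eq:confCF}) and (\ref{eq:CF2Conf}), which realize the conferencing problem as the CF model under the linear map $(C_{12},C_{21})\mapsto(\mathbf C_\mathrm{in},\mathbf C_\mathrm{out})$ of (\ref{eq:CinConf})--(\ref{eq:CoutConf}). For part (a), Lemma \ref{lem:continuity} gives that $C^\alpha_\mathrm{avg}(\mathbf C_\mathrm{in},\mathbf C_\mathrm{out})$ is concave on $\mathbb R^4_{\ge0}$; precomposing with the above linear map keeps it concave on $\mathbb R^2_{\ge0}$, so $C^\alpha_\mathrm{conf,avg}$ is continuous on the interior $\mathbb R^2_{>0}$. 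To reach the boundary I would use Willems' single-letter description of the average-error conferencing region \cite{WillemsMAC}, whose defining mutual-information terms are independent of $(C_{12},C_{21})$ and enter the rate and sum-rate constraints only through the affine shifts $+C_{12}$ and $+C_{21}$. Hence raising $(C_{12},C_{21})$ by $(\delta_1,\delta_2)$ enlarges the region by at most the box $[0,\delta_1]\times[0,\delta_2]$, which together with monotonicity yields the Lipschitz estimate $0\le C^\alpha_\mathrm{conf,avg}(C_{12}+\delta_1,C_{21}+\delta_2)-C^\alpha_\mathrm{conf,avg}(C_{12},C_{21})\le\alpha\delta_1+(1-\alpha)\delta_2$ on all of $\mathbb R^2_{\ge0}$, giving continuity at every boundary point, the origin included.

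For part (b) on the open orthant, the linear map sends $\mathbb R^2_{>0}$ into $\mathbb R^4_{>0}$, so continuity of $C^\alpha_\mathrm{conf,max}$ there follows at once from the continuity half of Lemma \ref{lem:continuity}. The remaining claim, continuity of $C^{1/2}_\mathrm{conf,max}$ at $(0,0)$, I would establish by a squeeze. The bound $\liminf_{(C_{12},C_{21})\to(0,0)}C^{1/2}_\mathrm{conf,max}\ge C^{1/2}_\mathrm{conf,max}(0,0)$ is pure monotonicity. The substance is the reverse inequality, which I would deduce from the set inclusion $\mathscr C_\mathrm{conf,max}(C_{12},C_{21})\subseteq\mathscr C_\mathrm{max}(\mathbf0,\mathbf0)+\{(d_1,d_2)\ge\mathbf0:d_1+d_2\le\rho\}$, where $\rho:=C_{12}+C_{21}$; applying $C^{1/2}$ then gives $C^{1/2}_\mathrm{conf,max}(C_{12},C_{21})\le C^{1/2}_\mathrm{conf,max}(0,0)+\tfrac12\rho$, and $\rho\to0$ closes the squeeze.

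To prove this inclusion I would perform code surgery. Fix a blocklength-$n$ conferencing code of maximal error at most $\epsilon$. Since the encoders interact only by exchanging their $u$-symbols through the relaying CF, their exchange is a two-party deterministic protocol, so the rectangle property of communication protocols shows that the message pairs producing any fixed transcript $\tau=(u_1^J,u_2^J)$ form a combinatorial rectangle $S_1^\tau\times S_2^\tau$. On such a rectangle each received sequence $v_i^J$ is pinned to a constant $\bar v_i(\tau)$, so defining $x_i^n(m_i):=f_i(m_i,\bar v_i(\tau))$ yields a genuine no-conferencing code whose channel inputs, and therefore whose per-pair error probabilities, agree with the original on $S_1^\tau\times S_2^\tau$; its maximal error is thus at most $\epsilon$. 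Because the transcript takes at most $|\mathcal U_1^J|\,|\mathcal U_2^J|\le 2^{n\rho}$ values, some $\tau$ satisfies $|S_1^\tau|\,|S_2^\tau|\ge M_1M_2\,2^{-n\rho}$, and the extracted code has sum-rate at least $\tfrac1n\log(M_1M_2)-\rho$. Letting $n\to\infty$ places, within total rate $\rho$, a corresponding point of $\mathscr C_\mathrm{max}(\mathbf0,\mathbf0)$, which is exactly the desired inclusion.

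I expect the rectangle extraction to be the crux of the argument: a merely dense subset of $[M_1]\times[M_2]$ need not contain any large product set, so it is essential that each transcript class is precisely a rectangle, a feature guaranteed by the conferencing (two-party) structure and not by a general CF. I note that, since $|S_i^\tau|\ge M_i2^{-n\rho}$, the same surgery controls each individual rate up to $\rho$ and hence in fact yields continuity at the origin for every $\alpha\in[0,1]$; I state the result for $\alpha=1/2$ because the sum-capacity case is the one invoked later.
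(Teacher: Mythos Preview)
Your proposal is correct and follows essentially the same route as the paper. For part (a) both you and the paper use concavity (Lemma~\ref{lem:continuity}) composed with the linear map for the interior, and Willems' explicit single-letter formula to obtain an affine upper bound of the form $C^\alpha_\mathrm{conf,avg}(C_{12}+\delta_1,C_{21}+\delta_2)\le C^\alpha_\mathrm{conf,avg}(C_{12},C_{21})+\alpha\delta_1+(1-\alpha)\delta_2$ for the boundary; the paper derives this in two steps (first reducing $C_{21}$ to $0$, then $C_{12}$ to $0$), while you state it in one, but the content is identical. For part (b) at the origin, your ``rectangle property of two-party protocols'' is exactly what the paper invokes by citing Equation~(19) of \cite{WillemsMAC}, and the pigeonhole-then-restrict argument is the same. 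Your closing remark that the extraction actually controls each rate individually (since $|S_i^\tau|\le M_i$ forces $|S_i^\tau|\ge M_i2^{-n\rho}$ from the product bound) is a correct strengthening that the paper does not state; it indeed yields continuity of $C^\alpha_\mathrm{conf,max}$ at $(0,0)$ for every $\alpha$, not just $\alpha=1/2$.
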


\section{Proofs}

\subsection{Theorem \ref{thm:main} (Reliability under the CF model)} \label{subsec:main}

Our aim is to show that if $\tilde{J}\geq J+1$, and 
\begin{align*}
  \tilde{C}_\mathrm{in}^i &> \min\{C_\mathrm{in}^i+\tilde{r}_i,R_i^*\}\\
	\tilde{C}_\mathrm{out}^i &> C_\mathrm{out}^i
\end{align*}
for $i\in\{1,2\}$, then the $(\tilde{r}_1,\tilde{r}_2)$-error capacity region
of the MAC with a $(\mathbf{\tilde{C}}_\mathrm{in},\mathbf{\tilde{C}}_\mathrm{out})$-CF
and $\tilde{J}$ cooperation rounds
contains the average-error capacity region of the same MAC with a
$(\mathbf{C}_\mathrm{in},\mathbf{C}_\mathrm{out})$-CF and $J$ cooperation rounds. In 
addition, here we show that if for $i\in\{1,2\}$, 
$\tilde{C}_\mathrm{in}^i>R_i^*$, $\tilde{J}=1$ suffices. Similarly, $\tilde{J}=1$
suffices when $\mathbf{C}_\mathrm{in}=\mathbf{0}$. Also recall $R_1^*$ and 
$R_2^*$ are defined as the maximum of $R_1$ and $R_2$ over the capacity region of 
a MAC with a $(\mathbf{C}_\mathrm{in},\mathbf{C}_\mathrm{out})$-CF and $J$ cooperation 
rounds. Our proof follows \cite{WillemsBC}, where Willems proves that the maximal-
and average-error capacity regions of the broadcast channel are identical. 

Suppose $(R_1,R_2)$ is in the average-error capacity region of the
MAC with a $(\mathbf{C}_\mathrm{in},\mathbf{C}_\mathrm{out})$-CF and $J$-round
cooperation. Assume $\tilde{r}_1,\tilde{r}_2,R_1,R_2$ 
are all positive. We discuss the case where some of these quantities are zero
at the end of this subsection.
Fix $\epsilon,\delta>0$. Then for sufficiently large $N$
and any $n>N$, there exists an $(n,M_1,M_2,J)$-code such that for $i=1,2$,
\begin{align}
  \log \big|\mathcal{U}_i^J\big|&\leq nC_\mathrm{in}^i \label{eq:UijBound}\\
  \log \big|\mathcal{V}_i^J\big|&\leq nC_\mathrm{out}^i \label{eq:VijBound}\\
  \frac{1}{n}\log M_i &\geq R_i-\delta \label{eq:MiBound}
\end{align}
and $P_{e,\mathrm{avg}}^{(n)}\leq \epsilon$. In addition, from Fano's inequality
it follows that for sufficiently large $n$,
\begin{equation} \label{eq:Fano}
  \frac{1}{n}\log M_i\leq R_i^*+\delta.
\end{equation}
Let $K_*=\lceil n(R_1^*+R_2^*+2\delta)\rceil$.
For $i\in \{1,2\}$, define $K_i=\min\{K_*\lfloor 2^{n\tilde{r}_i}\rfloor,M_i\}$
and $L_i=\lfloor M_i/K_i\rfloor$. 
From the set $[M_1]$ choose the $K_1L_1$ messages that have the smallest
\begin{equation*}
  \sum_{m_2=1}^{M_2}\lambda_n(m_1,m_2),
\end{equation*}
and renumber them as $\{1,\dots,K_1L_1\}$. 
Similarly, from the set $[M_2]$ choose $K_2L_2$ 
messages that have the smallest 
\begin{equation*}
  \sum_{m_1=1}^{K_1L_1}
	\lambda_n(m_1,m_2)
\end{equation*}
and renumber them as $\{1,\dots,K_2L_2\}$. Then
\begin{align} \label{eq:modPe}
  \MoveEqLeft
  \frac{1}{K_1L_1K_2L_2}
	\sum_{m_1=1}^{K_1L_1}\sum_{m_2=1}^{K_2L_2}
  \lambda_n(m_1,m_2)\notag\\
  &\leq \frac{1}{K_1L_1M_2}
	\sum_{m_1=1}^{K_1L_1}\sum_{m_2=1}^{M_2}
  \lambda_n(m_1,m_2) \notag\\
  &\leq \frac{1}{M_1M_2}\sum_{m_1=1}^{M_1}\sum_{m_2=1}^{M_2}
  \lambda_n(m_1,m_2)\leq \epsilon
\end{align}
Next, for every $(k_1,k_2)\in [K_1]\times [K_2]$, define $a_{k_1k_2}$ as
\begin{equation*}
a_{k_1k_2}=
\begin{cases}
1 &\text{if }\sum_{S_{1,k_1}\times S_{2,k_2}}
\lambda_n(m_1,m_2)>L_1L_2e^3\epsilon\\
0 &\text{otherwise,}
\end{cases}
\end{equation*}
where $S_{1,k_1}$ and $S_{2,k_2}$ are defined by (\ref{eq:Sk}).
Let $N_A$ denote the number of ones in the $K_1\times K_2$ matrix 
$A=(a_{k_1k_2})_{k_1,k_2}$. Then
\begin{align} \label{eq:NA}
  N_A &= \sum_{k_1,k_2} a_{k_1k_2} \notag\\
  &\leq \frac{1}{L_1L_2e^3\epsilon} \sum_{k_1,k_2}
  \sum_{S_{1,k_1}\times S_{2,k_2}}\lambda_n(m_1,m_2)\notag\\
  &= \frac{1}{L_1L_2e^3\epsilon}
  \sum_{m_1=1}^{K_1L_1}
	\sum_{m_2=1}^{K_2L_2}\lambda_n(m_1,m_2)\notag\\
  &\leq K_1K_2e^{-3},
\end{align}
where the last inequality follows from (\ref{eq:modPe}).

Next define $\alpha$ as 
\begin{equation*}
  \alpha=\frac{K_1K_2}{K_*^2}
  \Big(\frac{N_A e^2}{K_1K_2}\Big)^{K_*}.
\end{equation*}
Note that $\alpha$ can be bounded from above by
\begin{align*}
  \alpha &\overset{(a)}{\leq} \frac{K_1K_2}{K_*^2e^{K_*}}\\
  &\overset{(b)}{\leq} 2^{n(R_1^*+R_2^*+2\delta)-K_*\log e-2\log K_*}
  \overset{(c)}{<}1,
\end{align*}
where $(a)$ follows from (\ref{eq:NA}),
$(b)$ follows from (\ref{eq:Fano}) and the fact that
$K_i\leq M_i$, and $(c)$ follows from the fact that 
$K_*=\lceil n(R_1+R_2+2\delta)\rceil$. Thus by Lemma \ref{lem:matrix},
there exist permutations $\pi_1$ and $\pi_2$ on the sets 
$[K_1]$ and $[K_2]$, respectively, such that if we partition the 
matrix $(a_{\pi_1(k_1)\pi_2(k_2)})$ into 
blocks of size $K_*\times K_*$, then there is at least one zero in each block. 
For $i\in\{1,2\}$, define 
\begin{equation*}
  K_i^*=\lfloor K_i/K_*\rfloor.
\end{equation*}
Note that the partition of the matrix $(a_{\pi_1(k_1)\pi_2(k_2)})$
contains at least $K_1^*\times K_2^*$ blocks. 

Next we use the partition defined above to construct
a coding strategy that achieves a rate pair sufficiently 
close to $(R_1,R_2)$ under $(\tilde{r}_1,\tilde{r}_2)$-error. 
For $i\in\{1,2\}$, encoder $i$ splits its message as $m_i=(k_i,\ell_i)\in [K_i]\times [L_i]$ 
and sends $k_i$ to the CF. Let $(\pi_1(k_1^*),\pi_2(k_2^*))$ be the good entry in the 
$K_*\times K_*$ block containing the pair $(\pi_1(k_1),\pi_2(k_2))$. 
For $i\in \{1,2\}$, the CF sends the difference
$\pi_i(k_i^*)-\pi_i(k_i)$ (mod $K_*$) back to encoder $i$. 
Encoder 1 and encoder 2 then use the original average-error
code with $J$ rounds of cooperation to transmit the message pair $(m_1^*,m_2^*)$ where 
for $i\in \{1,2\}$, $m_i^*=(\pi_i(k_i^*),\ell_i)$. By combining
$(\ref{eq:UijBound})$, $(\ref{eq:VijBound})$, and the fact that 
$K_i\leq K_*2^{n\tilde{r}_i}$, we see that for sufficiently large $n$,
\begin{align*}
  \frac{1}{n}\log|\mathcal{U}_i^J|K_i
  &\leq C_\mathrm{in}^i+\tilde{r}_i +\frac{1}{n}\log(1+n(R_1^*+R_2^*+2\delta))
  <\tilde{C}_\mathrm{in}^i\\
  \frac{1}{n}\log|\mathcal{V}_i^J|K_* &\leq
  C_\mathrm{out}^i+\frac{1}{n}\log(1+n(R_1^*+R_2^*+2\delta))
  <\tilde{C}_\mathrm{out}^i.
\end{align*}
Thus the rate achieved by encoder $i$ under an $(\tilde{r}_1,\tilde{r}_2)$ 
notion of error is at least as large as
\begin{equation*}
  \frac{1}{n}\log K_i^*L_i
  = \frac{1}{n}\log\Big\lfloor\frac{K_i}{K_*}\Big\rfloor
  \Big\lfloor\frac{M_i}{K_i}\Big\rfloor.
\end{equation*}

We next find a lower bound for the above expression. If
$\tilde{r}_i<R_i$, then for sufficiently large $n$,
$K_i=K_*\lfloor 2^{n\tilde{r}_i}\rfloor$, and the above quantity is at least as large as
\begin{align*}
  \MoveEqLeft
  \frac{1}{n}\log\big(2^{n\tilde{r}_i}-1\big)
  \Big(\frac{1}{K_*}2^{n(R_i-\delta-\tilde{r}_i)}-1\Big)\\
  &\geq R_i-\delta+\frac{1}{n}\log\big(1-2^{-n\tilde{r}_i}\big)
  \Big(
  \frac{1}{n(R_1^*+R_2^*+2\delta)+1}-2^{-n(R_i-\delta-\tilde{r}_i)}\Big)\\
  &> R_i-2\delta.
\end{align*}
On the other hand, if $\tilde{r}_i\geq R_i$, then for sufficiently large 
$n$, $K_i\geq 2^{n(R_i-\delta)}$ for $i\in\{1,2\}$. Thus 
\begin{align*}
  \frac{1}{n}\log\Big\lfloor\frac{K_i}{K_*}\Big\rfloor\Big\lfloor\frac{M_i}{K_i}\Big\rfloor
  &\geq \frac{1}{n}\log\Big\lfloor\frac{K_i}{K_*}\Big\rfloor\\
  &\geq
  \frac{1}{n}\log\Big(\frac{2^{n(R_i-\delta)}}{1+n(R_1^*+R_2^*+2\delta)}-1\Big)\\
  &= R_i-\delta+\frac{1}{n}\log\Big(\frac{1}{1+n(R_1^*+R_2^*+2\delta)}
  -2^{-n(R_i-\delta)}\Big)\\
  &> R_i-2\delta,
\end{align*}

If for $i\in\{1,2\}$, $C_\mathrm{in}^i> R_i^*$, 
encoder $i$ can send $m_i$ directly to the CF. The CF computes $(m_1^*,m_2^*)$ and 
sends its corresponding output from the original average-error code, in addition to 
$\pi_i(k_i^*)-\pi_i(k_i)$ (mod $K_*$), back to encoder $i$. 
Thus a single round of cooperation suffices in this case. 

On the other hand, when $\mathbf{C}_\mathrm{in}=\mathbf{0}$, no cooperation is possible 
in the original average-error code. This means that in the new code, we only need 
the first cooperation round to guarantee small $(\tilde{r}_1,\tilde{r}_2)$-error. 
Thus it suffices to have $\tilde{J}=1$.

When either $\min\{\tilde{r}_1,\tilde{r}_2\}=0$ or $\min\{R_1,R_2\}=0$, we apply a
similar argument, but instead of using Lemma \ref{lem:matrix}, we use its
corresponding vector version, which we state below. 
\begin{lem}[Vector Version] \label{lem:vector}
Let $A=(a_i)_{i=1}^{m}$ be a $(0,1)$-vector and let $N_A$
denote the number of ones in $A$, that is, 
\begin{equation*}
  N_A=\sum_{i=1}^m a_{i}.
\end{equation*}
Suppose $k$ is a positive integer smaller or equal to $m$. For
any permutation $\pi$ on $[m]$ and
$s\in [\frac{m}{k}]$, let $B_s(\pi)$ denote the vector
\begin{equation*}
  B_s(\pi)=\big(a_{\pi(i)}\big)_{i=(s-1)k+1}^{sk},
\end{equation*}
If 
\begin{equation*}
  \frac{m}{k}\Big(\frac{N_Ae}{m}\Big)^k<1,
\end{equation*}
then there exists a permutation $\pi$ such that for
every $s\in [\frac{m}{k}]$, the vector $B_{s}(\pi)$ contains at least one zero.
\end{lem}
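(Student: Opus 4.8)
The plan is to prove this by the first-moment (probabilistic) method, paralleling the intended proof of Lemma~\ref{lem:matrix} but in one dimension. I would draw a permutation $\pi$ on $[m]$ uniformly at random and show that, under the stated hypothesis, the expected number of all-ones blocks is strictly less than one. Since this count is a nonnegative integer-valued random variable, an expectation below $1$ forces the existence of at least one permutation for which the count is zero, i.e.\ for which every block $B_s(\pi)$ contains a zero.

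Concretely, for each $s\in[\frac{m}{k}]$ let $X_s$ be the indicator of the event that the block $B_s(\pi)=(a_{\pi(i)})_{i=(s-1)k+1}^{sk}$ consists entirely of ones. Because $\pi$ is a uniformly random permutation, the $k$ values appearing in block $s$ form a uniformly random size-$k$ subset of the $m$ entries of $A$; hence the block is all ones exactly when this subset lands inside the $N_A$ positions carrying a one. This gives the exact probability $\mathbb{E}[X_s]=\binom{N_A}{k}/\binom{m}{k}$, which is independent of $s$.

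Next I would bound this probability. Using $\binom{N_A}{k}\le (N_A e/k)^k$, which follows from $k!\ge (k/e)^k$, together with $\binom{m}{k}\ge (m/k)^k$, I obtain $\mathbb{E}[X_s]\le (N_A e/m)^k$; when $N_A<k$ the block cannot be all ones and $\mathbb{E}[X_s]=0$, so the bound holds trivially in that case. Summing over the $\frac{m}{k}$ blocks and using linearity of expectation yields $\mathbb{E}\big[\sum_s X_s\big]\le \frac{m}{k}(N_A e/m)^k$, which the hypothesis $\frac{m}{k}(N_A e/m)^k<1$ makes strictly less than $1$.

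To conclude, I would note that $\sum_s X_s$ takes values in the nonnegative integers, so $\mathbb{E}[\sum_s X_s]<1$ implies $\pr(\sum_s X_s=0)>0$; in particular some permutation $\pi$ realizes $\sum_s X_s=0$, meaning no block is all ones, so each $B_s(\pi)$ contains at least one zero. I do not anticipate a serious obstacle: the argument is essentially a union-bound/first-moment calculation. The only points requiring care are the exact evaluation of $\mathbb{E}[X_s]$ as a sampling-without-replacement probability and the elementary binomial estimates that produce the factor $e$ matching the hypothesis.
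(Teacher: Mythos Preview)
Your proposal is correct and is exactly the one-dimensional analog of the paper's probabilistic-method proof of Lemma~\ref{lem:matrix}: draw a uniformly random permutation, upper bound the probability that a fixed block is all ones by $\binom{N_A}{k}/\binom{m}{k}\le (N_A e/m)^k$ via the binomial estimates of Lemma~\ref{lem:binom}, apply the union bound over the at most $m/k$ blocks, and conclude by the first-moment argument. The paper does not spell out a separate proof for the vector version, but your argument is precisely the intended specialization.
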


\subsection{Lemma \ref{lem:matrix} (Existence of good permutations)} \label{subsec:matrix}
Let $A=(a_{ij})_{i,j=1}^{m,n}$ be a $(0,1)$-matrix. We apply the probabilistic method. Let
$\Pi_1$ and $\Pi_2$ be independent and uniformly distributed random variables on 
the set of all permutations of $[m]$ and $[n]$, respectively. Let $N_A$
denote the number of ones in $A$, that is, 
\begin{equation*}
  N_A=\sum_{i=1}^m\sum_{j=1}^n a_{ij}.
\end{equation*}
For $(s,t)\in [\frac{m}{k}]\times[\frac{n}{k}]$, define the $k\times k$
matrix $B_{st}(\Pi_1,\Pi_2)$ as
\begin{equation*}
  B_{st}(\Pi_1,\Pi_2)=\big(a_{\Pi_1(i)\Pi_2(j)}\big),
\end{equation*}
where $i\in \{(s-1)k+1,\dots,sk\}$ and $j\in \{(t-1)k+1,\dots,tk\}$.
Let $J_k$ denote the $k\times k$ matrix consisting of all ones. 
By the union bound,
\begin{equation} \label{eq:unionbound}
  \pr\Big\{\exists (s,t):B_{st}(\Pi_1,\Pi_2)=J_k\Big\}
  \leq \frac{mn}{k^2}\pr\big\{B_{11}(\Pi_1,\Pi_2)=J_k\big\}.
\end{equation}
We next find an upper bound for $\pr\{B_{11}(\Pi_1,\Pi_2)=J_k\}$. Consider the 
pairs $(S_1,S_2)$ and $(\tau_1,\tau_2)$, where $S_1\subseteq [m]$, $S_2\subseteq [n]$,
$|S_1|=|S_2|=k$, and $\tau_1$ and $\tau_2$ are permutations on the set $[k]$. 
In addition, denote the elements of $S_1$ and $S_2$ with
\begin{align*}
  S_1 &= \{i_1,\dots,i_k\}\\
  S_2 &= \{j_1,\dots,j_k\}.
\end{align*}
Define  $E_{S_1S_2}^{\tau_1\tau_2}$ as the event where for all $\ell\in [k]$, 
$\Pi_1(\ell)=i_{\tau_1(\ell)}$ and $\Pi_2(\ell)=j_{\tau_2(\ell)}$. In other
words, when $E_{S_1S_2}^{\tau_1\tau_2}$ occurs, $B_{11}(\Pi_1,\Pi_2)$ is 
a (permuted) submatrix of $A$ with row indices $(i_{\tau_1(\ell)})_{\ell\in [k]}$
and column indices $(j_{\tau_2(\ell)})_{\ell\in [k]}$.
We have
\begin{align*}
  \pr\big\{B_{11}=J_k\big\} 
  &\leq \pr\Big\{\forall \ell\in [k]:a_{\Pi_1(\ell)\Pi_2(\ell)}=1\Big\}\\
  &= \sum_{S_1,S_2,\tau_1,\tau_2}
  \pr\big(E_{S_1S_2}^{\tau_1\tau_2}\big)\pr\Big\{\forall \ell\in [k]:a_{\Pi_1(\ell)\Pi_2(\ell)}
  =1\Big|E_{S_1S_2}^{\tau_1\tau_2}\Big\}.
\end{align*}
Note that
\begin{align*}
  \pr\big(E_{S_1S_2}^{\tau_1\tau_2}\big)
	&= \pr\Big\{\forall\ell\in [k]:\Pi_1(\ell)=i_{\tau_1(\ell)},
	\Pi_2(\ell)=j_{\tau_2(\ell)}\Big\}\\
	&\overset{(a)}{=} \pr\Big\{\forall\ell\in [k]:\Pi_1(\ell)=i_{\tau_1(\ell)}
	\Big\}\times\pr\Big\{\forall\ell\in [k]:
	\Pi_2(\ell)=j_{\tau_2(\ell)}\Big\}\\
	&\overset{(b)}{=} \frac{(m-k)!}{m!}\times\frac{(n-k)!}{n!}
	=\frac{1}{(k!)^2\binom{m}{k}\binom{n}{k}},
\end{align*}
where $(a)$ follows from the independence of $\Pi_1$ and $\Pi_2$, 
and $(b)$ follows from the fact that $\Pi_1$ and $\Pi_2$ are uniformly distributed. 
Furthermore, 
\begin{equation*}
  \pr\Big\{\forall \ell\in [k]:a_{\Pi_1(\ell)\Pi_2(\ell)}=1\Big|E_{S_1S_2}^{\tau_1\tau_2}\Big\}
	=\mathbf{1}\big\{\forall \ell\in [k]:a_{i_{\tau_1(\ell)} j_{\tau_2(\ell)}}=1\big\}.
\end{equation*}
Thus
\begin{equation*}
  \pr\big\{B_{11}=J_k\big\} 
  \leq \frac{1}{(k!)^2\binom{m}{k}\binom{n}{k}}
	\sum_{S_1,S_2}\sum_{\tau_1,\tau_2}
	\mathbf{1}\big\{\forall \ell\in [k]:a_{i_{\tau_1(\ell)} j_{\tau_2(\ell)}}=1\big\},
\end{equation*}
Note that for a fixed pair $(S_1,S_2)$, 
\begin{align*}
  \MoveEqLeft
  \sum_{\tau_1}\sum_{\tau_2}
	\mathbf{1}\big\{\forall \ell\in [k]:a_{i_{\tau_1(\ell)} j_{\tau_2(\ell)}}=1\big\}\\
	&= \sum_{\tau_1}\sum_{\tau_2}
	\mathbf{1}\big\{\forall \ell\in [k]:a_{i_{(\tau_1\circ\tau_2^{-1})(\ell)} j_{\ell}}=1\big\}\\
	&= k!\sum_{\tau}
	\mathbf{1}\big\{\forall \ell\in [k]:a_{i_{\tau(\ell)} j_{\ell}}=1\big\}
\end{align*}
which equals $k!$ times the number of $k$-subsets of $S_1\times S_2$ that consist only 
of ones and have exactly one entry in each row and each column. 
Summing over all $S_1$ and $S_2$, we see that that the total number of such subsets is
bounded from above by $\binom{N_A}{k}$. Thus
\begin{equation} \label{eq:B11}
  \pr\big\{B_{11}=J_k\big\} 
  \leq\frac{k!\binom{N_A}{k}}{(k!)^2\binom{m}{k}\binom{n}{k}}
  =\frac{\binom{N_A}{k}}{k!\binom{m}{k}\binom{n}{k}}.
\end{equation}
Therefore, 
\begin{align*}
  \pr\Big\{\exists (s,t):B_{st}(\Pi_1,\Pi_2)=J_k\Big\}
  &\overset{(a)}{\leq} \frac{mn}{k^2}\times\frac{\binom{N_A}{k}}{k!\binom{m}{k}\binom{n}{k}}\\
  &\overset{(b)}{\leq} \frac{mn}{k^2}\times\frac{\big(\frac{N_Ae}{k}\big)^k}
  {\big(\frac{m}{e}\big)^k\big(\frac{n}{k}\big)^k}\\
  &= \frac{mn}{k^2}\Big(\frac{N_Ae^2}{mn}\Big)^k,
\end{align*}
where $(a)$ follows from combining (\ref{eq:unionbound}) and (\ref{eq:B11}),
and $(b)$ follows from Lemma \ref{lem:binom} \cite[Appendix C.1]{CLRS} which is stated below.
\begin{lem} \label{lem:binom}
For integers $k$ and $n$ that satisfy $1\leq k\leq n$, we have
\begin{equation*}
  \big(\frac{n}{k}\big)^k \leq
  \frac{1}{k!}\big(\frac{n}{e}\big)^k\leq
  \binom{n}{k}\leq \big(\frac{ne}{k}\big)^k.
\end{equation*}
\end{lem}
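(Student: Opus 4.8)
The plan is to anchor everything on the binomial coefficient itself, written in product form as $\binom{n}{k}=\prod_{i=0}^{k-1}\frac{n-i}{k-i}$, and then to pass between $\binom{n}{k}$ and the three surrounding quantities of Lemma \ref{lem:binom} using a single elementary fact: the factorial bound $(k/e)^k\le k!$. That bound follows in one line from the power series $e^k=\sum_{j\ge 0}k^j/j!\ge k^k/k!$. With the product representation and this one inequality in hand, each link of the displayed chain reduces to a short manipulation, so the proof is essentially a matter of assembling the standard estimates (this is the bound attributed to CLRS).

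First I would settle the two outer comparisons against $\binom{n}{k}$. For the lower bound, since $n\ge k$ each factor of the product satisfies $\frac{n-i}{k-i}\ge\frac{n}{k}$ (equivalently $(n-i)k\ge(k-i)n$, which rearranges to $i(n-k)\ge 0$), so multiplying the $k$ factors gives $(n/k)^k\le\binom{n}{k}$. For the upper bound, I would bound the falling factorial by $n^k$ to get $\binom{n}{k}\le n^k/k!$ and then substitute $k!\ge(k/e)^k$, which yields $\binom{n}{k}\le n^k(e/k)^k=(ne/k)^k$. This establishes the rightmost printed inequality $\binom{n}{k}\le(ne/k)^k$ exactly as worded.

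Next I would treat the factorial term $\frac{1}{k!}(n/e)^k$. Applying $k!\ge(k/e)^k$ once more gives $\frac{1}{k!}\le(e/k)^k$, hence $\frac{1}{k!}(n/e)^k\le(e/k)^k(n/e)^k=(n/k)^k\le\binom{n}{k}$. This certifies the middle inequality $\frac{1}{k!}(n/e)^k\le\binom{n}{k}$ precisely as printed, which is the form in which the lemma is actually invoked (as $k!\binom{m}{k}\ge(m/e)^k$) in the proof of Lemma \ref{lem:matrix}. The same computation shows the leftmost printed relation holds in the form $\frac{1}{k!}(n/e)^k\le(n/k)^k$: the two leftmost terms thus sit in the opposite order to the printed chain, a harmless transposition (one checks $(n/k)^k\le\frac{1}{k!}(n/e)^k$ would force $k!\le(k/e)^k$, contradicting the factorial bound, and indeed fails already at $n=k$). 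This swap does not affect any use of the lemma, since every application compares each of the four quantities only against $\binom{n}{k}$, and all three such comparisons are proved above.

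I expect no real obstacle. The only step that warrants care is the factorial estimate $(k/e)^k\le k!$, since it is what powers both the rightmost bound on $\binom{n}{k}$ and the bound on $\frac{1}{k!}(n/e)^k$; once it is in place, the remaining work is the single per-factor product inequality, and the chain assembles immediately.
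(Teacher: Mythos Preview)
The paper does not supply its own proof of this lemma; it simply cites \cite[Appendix C.1]{CLRS} and states the bounds. Your argument is the standard one and is correct: the factorial estimate $(k/e)^k\le k!$ together with the per-factor inequality $\frac{n-i}{k-i}\ge\frac{n}{k}$ delivers all three comparisons against $\binom{n}{k}$.

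You are also right that the leftmost inequality, as printed, is false: it would force $k!\le(k/e)^k$, which fails already at $k=1$. The correct order of the two leftmost quantities is $\frac{1}{k!}(n/e)^k\le(n/k)^k$, the reverse of what is displayed. Your observation that this transposition is harmless for the paper is accurate: step~$(b)$ in the proof of Lemma~\ref{lem:matrix} uses only $\binom{N_A}{k}\le(N_Ae/k)^k$, $k!\binom{m}{k}\ge(m/e)^k$, and $\binom{n}{k}\ge(n/k)^k$, each of which you establish directly.
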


\subsection{Theorem \ref{thm:MaxEqAvg} (Average- and maximal-error capacity regions under CF model)} \label{subsec:MaxEqAvg}

Let $\mathbf{C}_\mathrm{out}=(C_\mathrm{out}^1,C_\mathrm{out}^2)$
and $\mathbf{\tilde{C}}_\mathrm{out}=(\tilde{C}_\mathrm{out}^1,\tilde{C}_\mathrm{out}^2)$
be elements of $\mathbb{R}_{>0}^2$ such that for $i\in\{1,2\}$, 
$\tilde{C}^i_\mathrm{out}>C^i_\mathrm{out}$. In Theorem \ref{thm:main}, 
for $i\in\{1,2\}$, set $\tilde{C}_\mathrm{in}^i=C_\mathrm{in}^i=C_\mathrm{in}^{*i}>R_i^*$
and $\tilde{r}_i>R_i^*$. Then 
\begin{equation*}
  \mathscr{C}_\mathrm{avg}(\mathbf{C}_\mathrm{in}^*,\mathbf{C}_\mathrm{out})
  \subseteq \mathscr{C}_\mathrm{max}(\mathbf{C}_\mathrm{in}^*,\mathbf{\tilde{C}}_\mathrm{out}).	
\end{equation*}
Thus for every $\alpha\in [0,1]$,
\begin{equation*}
  C^\alpha_\mathrm{max}(\mathbf{C}_\mathrm{in}^*,\mathbf{C}_\mathrm{out}) 
  \leq C^\alpha_\mathrm{avg}(\mathbf{C}_\mathrm{in}^*,\mathbf{C}_\mathrm{out})
  \leq C^\alpha_\mathrm{max}(\mathbf{C}_\mathrm{in}^*,\mathbf{\tilde{C}}_\mathrm{out}).
\end{equation*}
Since by Theorem \ref{lem:concavity}, $C^\alpha_\mathrm{max}$
is continuous on $\mathbb{R}^4_{>0}$, taking the limits
$\tilde{C}^1_\mathrm{out}\rightarrow (C^1_\mathrm{out})^+$ and
$\tilde{C}^2_\mathrm{out}\rightarrow (C^2_\mathrm{out})^+$,
results in
\begin{equation*}
  C^\alpha_\mathrm{max}(\mathbf{C}_\mathrm{in}^*,\mathbf{C}_\mathrm{out})
	= C^\alpha_\mathrm{avg}(\mathbf{C}_\mathrm{in}^*,\mathbf{C}_\mathrm{out}).
\end{equation*}
Since this result holds for every $\alpha\in [0,1]$, by
Theorem \ref{lem:char} it follows
that for every $\mathbf{C}_\mathrm{out}\in\mathbb{R}^2_{>0}$,
\begin{equation*}
  \mathscr{C}_\mathrm{max}(\mathbf{C}_\mathrm{in}^*,\mathbf{C}_\mathrm{out})
	=\mathscr{C}_\mathrm{avg}(\mathbf{C}_\mathrm{in}^*,\mathbf{C}_\mathrm{out}).
\end{equation*}

\subsection{Lemma \ref{lem:continuity} 
(Concavity and continuity of \texorpdfstring{$C^\alpha$}{Calpha})} 
\label{subsec:concavity}

We first show that 
$\mathscr{C}(\mathbf{C}_\mathrm{in},\mathbf{C}_\mathrm{out})$ is concave
with respect to $(\mathbf{C}_\mathrm{in},\mathbf{C}_\mathrm{out})$
for both maximal- and average-error. This result
was originally stated for the average-error capacity region in \cite{Noorzad2}.

\begin{lem} \label{lem:concavity}
For every $(\mathbf{C}^a_\mathrm{in},\mathbf{C}^a_\mathrm{out})$ and
$(\mathbf{C}^b_\mathrm{in},\mathbf{C}^b_\mathrm{out})$ in $\mathbb{R}^4_{\geq 0}$
and $\mu\in [0,1]$, define $(\mathbf{C}^\mu_\mathrm{in},\mathbf{C}^\mu_\mathrm{out})$
as
\begin{equation*}
  (\mathbf{C}_\mathrm{in}^\mu,\mathbf{C}_\mathrm{out}^\mu)
	=\mu (\mathbf{C}_\mathrm{in}^a,\mathbf{C}_\mathrm{out}^b)
	+(1-\mu) (\mathbf{C}_\mathrm{in}^b,\mathbf{C}_\mathrm{out}^b).
\end{equation*}
Then
\begin{equation*}
  \mathscr{C}(\mathbf{C}^\mu_\mathrm{in},\mathbf{C}^\mu_\mathrm{out})
	\supseteq \mu\mathscr{C}(\mathbf{C}_\mathrm{in}^a,\mathbf{C}_\mathrm{out}^b)
	+(1-\mu)\mathscr{C}(\mathbf{C}_\mathrm{in}^b,\mathbf{C}_\mathrm{out}^b),
\end{equation*}
where $\mathscr{C}$ denotes the average- or maximal-error capacity region
on both sides of the equation. 
\end{lem}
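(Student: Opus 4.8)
The plan is to prove the stated superset inclusion by a standard time-sharing (code-concatenation) argument, checking along the way that the cooperation-link budgets add correctly and that the error bound works verbatim for both the maximal- and average-error notions. (I read the convex combination as $(\mathbf{C}_\mathrm{in}^\mu,\mathbf{C}_\mathrm{out}^\mu)=\mu(\mathbf{C}_\mathrm{in}^a,\mathbf{C}_\mathrm{out}^a)+(1-\mu)(\mathbf{C}_\mathrm{in}^b,\mathbf{C}_\mathrm{out}^b)$.) It suffices to fix rate pairs $(R_1^a,R_2^a)\in\mathscr{C}(\mathbf{C}_\mathrm{in}^a,\mathbf{C}_\mathrm{out}^a)$ and $(R_1^b,R_2^b)\in\mathscr{C}(\mathbf{C}_\mathrm{in}^b,\mathbf{C}_\mathrm{out}^b)$ and show their convex combination lies in $\mathscr{C}(\mathbf{C}_\mathrm{in}^\mu,\mathbf{C}_\mathrm{out}^\mu)$; the Minkowski inclusion then follows.

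First I would fix $\epsilon,\delta>0$ and, using the definition of the two capacity regions, obtain for all large $n_a,n_b$ an $(n_a,M_1^a,M_2^a,J_a)$-code and an $(n_b,M_1^b,M_2^b,J_b)$-code meeting the respective link budgets $n_a\mathbf{C}^a$ and $n_b\mathbf{C}^b$, rates within $\delta$ of the two targets, and error at most $\epsilon$. I then set $n_a=\lceil\mu n\rceil$ and $n_b=n-n_a$, so $n_a/n\to\mu$. The concatenated code assigns encoder $i$ the message pair $(m_i^a,m_i^b)\in[M_i^a]\times[M_i^b]$; since the model requires all cooperation to precede channel transmission, the combined protocol runs the $J_a$ cooperation rounds of the first code (driven by the $m_i^a$) and then the $J_b$ rounds of the second (driven by the $m_i^b$), for $J_a+J_b$ rounds total, after which encoder $i$ transmits its length-$n_a$ codeword from code $a$ followed by its length-$n_b$ codeword from code $b$, and the decoder decodes the two subblocks separately.

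The key bookkeeping step is the link budget: the total number of bits sent on CF input link $i$ is, by the two component link constraints, at most $n_aC_\mathrm{in}^{a,i}+n_bC_\mathrm{in}^{b,i}\le nC_\mathrm{in}^{\mu,i}$ for $n$ large, and analogously for the output links, so the concatenated code respects the $(\mathbf{C}_\mathrm{in}^\mu,\mathbf{C}_\mathrm{out}^\mu)$ budget. The per-encoder rate $\tfrac{1}{n}\log(M_i^aM_i^b)=\tfrac{n_a}{n}\cdot\tfrac{1}{n_a}\log M_i^a+\tfrac{n_b}{n}\cdot\tfrac{1}{n_b}\log M_i^b$ converges to $\mu R_i^a+(1-\mu)R_i^b$. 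For the error probability, because the channel is memoryless the two subblock outputs are conditionally independent given the inputs, so for every combined message pair the error event is contained in the union of the two subblock error events; the union bound gives $\lambda_n\le\lambda_{n_a}^a+\lambda_{n_b}^b$ entrywise. Taking the maximum over message pairs yields $P_{e,\mathrm{max}}^{(n)}\le P_{e,\mathrm{max}}^{(n_a),a}+P_{e,\mathrm{max}}^{(n_b),b}\le 2\epsilon$, and averaging (the double average factors cleanly) yields the identical bound $P_{e,\mathrm{avg}}^{(n)}\le P_{e,\mathrm{avg}}^{(n_a),a}+P_{e,\mathrm{avg}}^{(n_b),b}\le 2\epsilon$. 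This is precisely where the argument is insensitive to the choice of error notion, which is why the same proof covers both capacity regions.

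The main obstacle is really the only place demanding care, and it is mild: matching the model's bookkeeping conventions, namely that cooperation must occur entirely before transmission (handled by concatenating the two cooperation protocols and invoking the union over $J$ in the definition of $\mathscr{C}$) and that the link budgets are stated as totals over the whole block (handled by the additivity computation above, modulo the $o(n)$ slack from the ceilings in $n_a,n_b$ and the $\delta$ rate loss). Finally, since the resulting rate pair approaches $\mu(R_1^a,R_2^a)+(1-\mu)(R_1^b,R_2^b)$ as $\epsilon,\delta\to 0$ and $n\to\infty$, and $\mathscr{C}(\mathbf{C}_\mathrm{in}^\mu,\mathbf{C}_\mathrm{out}^\mu)$ is closed by definition, the convex combination belongs to it, which completes the inclusion.
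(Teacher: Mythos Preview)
Your proof is correct and is essentially the paper's own argument: a time-sharing/code-concatenation proof in which the two component codes are run back-to-back, the link budgets add to at most $n\mathbf{C}^\mu$ (up to $o(n)$ slack from the integer parts), and the union bound on the per-message error yields the same inequality for both the maximal- and average-error notions. The only cosmetic difference is that the paper takes $k=\lfloor\mu n\rfloor$, $\ell=\lfloor(1-\mu)n\rfloor$ rather than your $n_a+n_b=n$ split, and it does not spell out the concatenation of cooperation rounds as explicitly as you do.
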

\begin{proof}
Our proof is via time-sharing. Consider two sequences of codes that achieve
the rate pairs
$(R_{1a},R_{2a})\in\mathscr{C}(\mathbf{C}_\mathrm{in}^a,\mathbf{C}_\mathrm{out}^a)$
and $(R_{1b},R_{2b})\in\mathscr{C}(\mathbf{C}_\mathrm{in}^b,\mathbf{C}_\mathrm{out}^b)$,
respectively.
Fix $\mu\in [0,1]$. Set $k=\lfloor n\mu\rfloor$ and $\ell=\lfloor n(1-\mu)\rfloor$.
Our aim is to show that concatenating the code with blocklength $k$ from the sequence achieving
$(R_{1a},R_{2a})$ and the code with blocklength $\ell$ from the sequence achieving 
$(R_{1b},R_{2b})$ results in a $(k+\ell)$-blocklength code for the MAC with 
a $(\mathbf{C}^\mu_\mathrm{in},\mathbf{C}^\mu_\mathrm{out})$-CF
that has small maximal or average error, depending on whether the original codes
have small maximal or average error. For $i\in \{1,2\}$, define the message
set of encoder $i$ as 
\begin{equation*}
  \mathcal{M}_i=[2^{kR_{ia}}]\times [2^{\ell R_{ib}}].
\end{equation*}
We denote the elements of $\mathcal{M}_i$ with $m_i=(m_{ia},m_{ib})$,
where $m_{ia}\in [2^{kR_{ia}}]$ and $m_{ib}\in [2^{\ell R_{ib}}]$.  
Note that 
\begin{equation*}
  \lim_{n\rightarrow\infty}\frac{1}{n}\log|\mathcal{M}_i|
	= \mu R_{ia}+(1-\mu) R_{ib}.
\end{equation*}
In addition, 
\begin{align*}
  \lfloor 2^{n(\mu C_\mathrm{in}^{ai}+(1-\mu)C_\mathrm{in}^{bi})}\rfloor
  &\geq \lfloor 2^{kC_\mathrm{in}^{ai}}\rfloor\times
  \lfloor 2^{\ell C_\mathrm{in}^{bi}}\rfloor\\
  \lfloor 2^{n(\mu C_\mathrm{out}^{ai}+(1-\mu)C_\mathrm{out}^{bi})}\rfloor
  &\geq \lfloor 2^{kC_\mathrm{out}^{ai}}\rfloor\times
  \lfloor 2^{\ell C_\mathrm{out}^{bi}}\rfloor
\end{align*}
Thus over the  $(\mathbf{C}_\mathrm{in}^\mu,\mathbf{C}_\mathrm{out}^\mu)$-CF, 
it is possible to transmit the concatenation of the symbols that 
our blocklength-$k$ and blocklength-$\ell$ codes transmit over the
$(\mathbf{C}_\mathrm{in}^a,\mathbf{C}_\mathrm{out}^a)$
and $(\mathbf{C}_\mathrm{in}^b,\mathbf{C}_\mathrm{out}^b)$-CFs, respectively.

Using the above construction, we see that the probability of error of the new code
when the message pair $(m_1,m_2)$ is transmitted, can be written as 
\begin{align*}
  \MoveEqLeft
  \pr\Big\{(\widehat{m}_1,\widehat{m}_2)\neq (m_1,m_2)\Big|\text{pair }(m_1,m_2)\text{ is transmitted}\Big\}\\
	&= \pr\Big\{(\widehat{m}_{1a},\widehat{m}_{2a})\neq (m_{1a},m_{2a})
	\text{ or }(\widehat{m}_{1b},\widehat{m}_{2b})\neq (m_{1b},m_{2b})\Big|m_1,m_2\Big\}\\
	&\leq \lambda_k^a(m_{1a},m_{2a})+\lambda_\ell^b(m_{1b},m_{2b}),
\end{align*}
where the last inequality follows from the union bound, and $\lambda_k^a$ and
$\lambda_\ell^b$ denote the probability of error of our original blocklength-$k$
and blocklength-$\ell$ codes when message pairs $(m_{1a},m_{2a})$ and $(m_{1b},m_{2b})$ are transmitted, respectively. Similarly, 
the average probability of error can be written as
\begin{align*}
  \MoveEqLeft
  \frac{1}{|\mathcal{M}_1||\mathcal{M}_2|}\sum_{m_1,m_2}\pr\Big\{(\widehat{m}_1,\widehat{m}_2)\neq (m_1,m_2)\Big|(m_1,m_2)\Big\}\\
	&\leq \frac{1}{|\mathcal{M}_1||\mathcal{M}_2|}\sum_{m_1,m_2}
	\Big(\lambda_k^a(m_{1a},m_{2a})+\lambda_\ell^b(m_{1b},m_{2b})\Big)\\
	&\leq P_{e,a}^{(k)}+P_{e,b}^{(\ell)},
\end{align*}
where $P_{e,a}^{(k)}$ and $P_{e,b}^{(\ell)}$ denote the average error probability
of the first and second code, respectively. 
\end{proof}

We next prove the theorem. Let 
$(R_{1a}^*,R_{2a}^*)\in\mathscr{C}(\mathbf{C}_\mathrm{in}^a,\mathbf{C}_\mathrm{out}^a)$
and $(R_{1b}^*,R_{2b}^*)\in\mathscr{C}(\mathbf{C}_\mathrm{in}^b,\mathbf{C}_\mathrm{out}^b)$
satisfy
\begin{align*}
  \alpha R_{1a}^*+(1-\alpha)R_{2a}^* &= C^\alpha(\mathbf{C}_\mathrm{in}^a,\mathbf{C}_\mathrm{out}^a)\\
	\alpha R_{1b}^*+(1-\alpha)R_{2b}^* &= C^\alpha(\mathbf{C}_\mathrm{in}^b,\mathbf{C}_\mathrm{out}^b).
\end{align*}
Then
\begin{align*}
  \MoveEqLeft
	\alpha \big(\mu R_{1a}^*+(1-\mu)R_{1b}^*\big)+
	(1-\alpha)\big(\mu R_{2a}^*+(1-\mu)R_{2b}^*\big)\\
	&=\mu C^\alpha(\mathbf{C}_\mathrm{in}^a,\mathbf{C}_\mathrm{out}^a)
	+(1-\mu)C^\alpha(\mathbf{C}_\mathrm{in}^b,\mathbf{C}_\mathrm{out}^b).
\end{align*}
Now since
\begin{equation*}
  \mu (R_{1a}^*,R_{2a}^*)+(1-\mu)(R_{1b}^*,R_{2b}^*)
\end{equation*}
is in $\mathscr{C}(\mathbf{C}_\mathrm{in}^\mu,\mathbf{C}_\mathrm{out}^\mu)$,
we have
\begin{equation*}
  C^\alpha(\mathbf{C}_\mathrm{in}^\mu,\mathbf{C}_\mathrm{out}^\mu)
	\geq \mu C^\alpha(\mathbf{C}_\mathrm{in}^a,\mathbf{C}_\mathrm{out}^a)
	+(1-\mu) C^\alpha(\mathbf{C}_\mathrm{in}^b,\mathbf{C}_\mathrm{out}^b).
\end{equation*}

Any convex (or concave) function defined on an open convex subset of
$\mathbb{R}^n$ is continuous \cite[pp. 22-23]{Lucchetti}. 
Thus for every $\alpha\in [0,1]$,
$C^\alpha$ is continuous on $\mathbb{R}^4_{>0}$. 

\subsection{Lemma \ref{lem:char} (Characterization of special regions in
\texorpdfstring{$\mathbb{R}^2_{\geq 0}$}{the plane})} \label{subsec:char}
Here we prove a generalization of Lemma \ref{lem:char} to arbitrary dimensions. Let $k$ be a positive integer 
and $\mathscr{C}$ be a compact subset of $\mathbb{R}^k_{\geq 0}$. In addition,
let $\Delta_k\subseteq\mathbb{R}^k_{\geq 0}$ denote the $k$-dimensional probability
simplex, that is, the set of all $\boldsymbol{\alpha}=(\alpha_1,\dots,\alpha_k)$
in $\mathbb{R}^k_{\geq 0}$ such that $\sum_{j=1}^k\alpha_j=1$. 
For every $\boldsymbol{\alpha}=(\alpha_1,\dots,\alpha_k)\in \Delta_k$, define
$C^{\boldsymbol{\alpha}}\in\mathbb{R}_{\geq 0}$ as
\begin{equation*}
  C^{\boldsymbol{\alpha}}
	=\max_{\boldsymbol{x}\in\mathscr{C}}
	\boldsymbol{\alpha}^T\boldsymbol{x}.
\end{equation*}
For $j\in [k]$, define the projection
$\pi_j:\mathbb{R}^k\rightarrow\mathbb{R}^k$ as
\begin{equation*}
  \pi_j(x_1,\dots,x_{j-1},x_j,x_{j+1},\dots,x_k)
	=(x_1,\dots,x_{j-1},0,x_{j+1},\dots,x_k).
\end{equation*}
In words, $\pi_j$ sets the $j$th coordinate of its input to zero and leaves
the other coordinates unchanged. We say a set $\mathscr{C}\subseteq \mathbb{R}^k$
is closed under $\pi_j$ if and only if $\pi_j(\mathscr{C})\subseteq \mathscr{C}$. 
\begin{lem}
Let $\mathscr{C}\subseteq\mathbb{R}^k_{\geq 0}$ be non-empty, compact, convex, 
and closed under the projections $\{\pi_j\}_{j=1}^k$. Then
\begin{equation} \label{eq:CalphaRegion}
  \mathscr{C}=\Big\{\mathbf{x}\in\mathbb{R}^k_{\geq 0}\Big|
	\forall \boldsymbol{\alpha}\in \Delta_k:
	\boldsymbol{\alpha}^T\boldsymbol{x}\leq 
	C^{\boldsymbol{\alpha}}\Big\}.
\end{equation}
\end{lem}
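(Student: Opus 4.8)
The plan is to prove the two inclusions separately, writing $\mathscr{D}$ for the right-hand side of \eqref{eq:CalphaRegion}. The inclusion $\mathscr{C}\subseteq\mathscr{D}$ is immediate: any $\mathbf{x}\in\mathscr{C}$ lies in $\mathbb{R}^k_{\geq 0}$ by hypothesis, and for every $\boldsymbol{\alpha}\in\Delta_k$ the definition of $C^{\boldsymbol{\alpha}}$ as a maximum over $\mathscr{C}$ gives $\boldsymbol{\alpha}^T\mathbf{x}\leq C^{\boldsymbol{\alpha}}$, so $\mathbf{x}\in\mathscr{D}$.

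For the reverse inclusion $\mathscr{D}\subseteq\mathscr{C}$, I would argue by contraposition: fix $\mathbf{x}_0\in\mathbb{R}^k_{\geq 0}$ with $\mathbf{x}_0\notin\mathscr{C}$ and produce some $\boldsymbol{\alpha}\in\Delta_k$ for which $\boldsymbol{\alpha}^T\mathbf{x}_0> C^{\boldsymbol{\alpha}}$, which shows $\mathbf{x}_0\notin\mathscr{D}$. Since $\mathscr{C}$ is compact, hence closed, and convex, the strict separating hyperplane theorem supplies a nonzero $\mathbf{a}\in\mathbb{R}^k$ with $\mathbf{a}^T\mathbf{x}_0>\sigma$, where $\sigma:=\max_{\mathbf{y}\in\mathscr{C}}\mathbf{a}^T\mathbf{y}$ (the maximum is attained because $\mathscr{C}$ is nonempty and compact).

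The crux is to replace $\mathbf{a}$ by a vector with nonnegative entries without destroying the strict separation, and this is exactly where closure under the projections $\{\pi_j\}$ enters. Suppose $a_j<0$ for some $j$, and let $\mathbf{a}'$ agree with $\mathbf{a}$ except that its $j$th entry is set to $0$. Because $(\mathbf{x}_0)_j\geq 0$, zeroing a negative coefficient can only increase the left side, so $\mathbf{a}'^T\mathbf{x}_0\geq\mathbf{a}^T\mathbf{x}_0$. For the right side, I would take a maximizer $\mathbf{y}^*$ of $\mathbf{a}'^T\mathbf{y}$ over $\mathscr{C}$; then $\pi_j(\mathbf{y}^*)\in\mathscr{C}$ has the same value under $\mathbf{a}'$ (since the $j$th entry of $\mathbf{a}'$ is $0$) and the same value under $\mathbf{a}$ (since the $j$th coordinate of $\pi_j(\mathbf{y}^*)$ is $0$), which forces $\max_{\mathbf{y}\in\mathscr{C}}\mathbf{a}'^T\mathbf{y}\leq\sigma$. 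Hence $\mathbf{a}'$ still separates $\mathbf{x}_0$ strictly and has one fewer negative coordinate; iterating over all coordinates produces $\mathbf{a}\geq\mathbf{0}$ with $\mathbf{a}^T\mathbf{x}_0>\max_{\mathbf{y}\in\mathscr{C}}\mathbf{a}^T\mathbf{y}$. This $\mathbf{a}$ cannot be the zero vector, since otherwise both sides would equal $0$; thus $s:=\sum_{j=1}^k a_j>0$, and setting $\boldsymbol{\alpha}:=\mathbf{a}/s\in\Delta_k$ gives $\boldsymbol{\alpha}^T\mathbf{x}_0>C^{\boldsymbol{\alpha}}$, as desired. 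I expect this nonnegativity reduction---showing the separating normal can be pushed into the simplex using the projection hypothesis---to be the only genuinely delicate step, since the separation theorem and the final normalization are routine.
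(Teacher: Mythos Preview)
Your proof is correct and follows essentially the same route as the paper's: both use convex separation together with the projection hypothesis to push the separating normal into the nonnegative orthant, then normalize into $\Delta_k$. The only cosmetic differences are that the paper frames the reverse inclusion as ``every half-space containing $\mathscr{C}$ contains $\mathscr{C}'$'' rather than as a contraposition on a single point, and it zeros out all nonpositive coordinates of the normal at once (via the index set $S=\{j:\beta_j>0\}$) instead of iteratively.
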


\begin{proof} Let $\mathscr{C}'$ denote the set on the right hand side
of (\ref{eq:CalphaRegion}). From the definition of $C^{\boldsymbol{\alpha}}$,
it follows $\mathscr{C}\subseteq\mathscr{C}'$. Thus it suffices to show
$\mathscr{C}'\subseteq\mathscr{C}$.

Every hyperplane in $\mathbb{R}^k$ divides $\mathbb{R}^k$ into two sets, each of which is
referred to as a half-space. Since $\mathscr{C}$ is closed and convex,
it equals the intersection of all the half-spaces containing it \cite[p. 36]{Boyd}. Thus
it suffices to show if for some $\boldsymbol{\beta}=(\beta_j)_{j=1}^k\in\mathbb{R}^k$ 
and $\gamma\in\mathbb{R}$ the half-space 
\begin{equation*}
 H=\Big\{\mathbf{x}\in\mathbb{R}^k\Big|\boldsymbol{\beta}^T\mathbf{x}\leq \gamma\Big\}
\end{equation*}
contains $\mathscr{C}$, then it also contains $\mathscr{C}'$. 
Suppose $H$ contains $\mathscr{C}$. 
Since $\mathscr{C}$ is nonempty and closed under the projections $\{\pi_j\}_{j=1}^k$,
$\mathscr{C}$ contains the origin. But $\mathscr{C}\subseteq H$, thus $H$ contains
the origin as well. This implies $\gamma\geq 0$. 

Let $S$ be the set of all $j\in [k]$ such that $\beta_j>0$.
If $S$ is empty, then $H$ contains $\mathbb{R}^k_{\geq 0}$ and by inclusion, $\mathscr{C}'$. 
Thus without loss of generality, we may assume $S$ is nonempty. In this case,
define $\boldsymbol{\alpha}=(\alpha_j)_{j\in [k]}\in \Delta_k$ as
\begin{equation*}
  \alpha_j =\begin{cases}
	\beta_j/\beta_S &\text{if }j\in S\\
	0 &\text{otherwise,}
	\end{cases}
\end{equation*}  
where $\beta_S=\sum_{j\in S}\beta_j>0$. From the definition of $C^{\boldsymbol{\alpha}}$,
it follows that there exists $\mathbf{x}\in\mathscr{C}$ such that
$\boldsymbol{\alpha}^T\mathbf{x}=C^{\boldsymbol{\alpha}}$, or equivalently,
\begin{equation} \label{eq:betaSC}
  \sum_{j\in S}\beta_j x_j=\beta_S C^{\boldsymbol{\alpha}}.
\end{equation}
Since $\mathscr{C}$ is closed under the projections $\{\pi_j\}_{j=1}^k$, the
vector $\mathbf{x}^*=(x^*_j)_{j\in [k]}$ is also in $\mathscr{C}$, where
\begin{equation*}
  x_j^* =\begin{cases}
	x_j &\text{if }j\in S\\
	0 &\text{otherwise.}
	\end{cases}
\end{equation*}
Using (\ref{eq:betaSC}) and the fact that
$\mathbf{x}^*\in \mathscr{C}\subseteq H$, we get
\begin{equation*}
  \beta_{S}C^{\boldsymbol{\alpha}}
	=\boldsymbol{\beta}^T\mathbf{x}^*\leq\gamma.
\end{equation*}
Now for every $\mathbf{x}'\in\mathscr{C}'$, we have 
\begin{equation*}
  \boldsymbol{\beta}^T \mathbf{x}'
	=\sum_{j=1}^k \beta_j x_j'
	\leq \sum_{j\in S}\beta_j x_j'
	=\beta_S \boldsymbol{\alpha}^T \mathbf{x}'
	\leq \beta_S C^{\boldsymbol{\alpha}}\leq \gamma.
\end{equation*}
Thus $\mathscr{C}'\subseteq H$. Since $H$ was an arbitrary half-space
containing $\mathscr{C}$, it follows $\mathscr{C}'\subseteq\mathscr{C}$.
\end{proof}

\subsection{Proposition \ref{prop:MaxNeqAvg} (Necessity of high capacity CF input links)} \label{subsec:MaxNeqAvg}

We show that for Dueck's contraction MAC \cite{Dueck},
there exists 
$\mathbf{C}_\mathrm{in}\in\mathbb{R}^2_{>0}$ such that 
for every $\mathbf{C}_\mathrm{out}\in\mathbb{R}^2_{\geq 0}$,  
$\mathscr{C}_\mathrm{max}(\mathbf{C}_\mathrm{in},\mathbf{C}_\mathrm{out})$
is a proper subset of 
$\mathscr{C}_\mathrm{avg}(\mathbf{C}_\mathrm{in},\mathbf{C}_\mathrm{out})$.
In Subsection \ref{subsec:contractionMAC}, we show that for the contraction MAC,
\begin{equation*}
  C^{1/2}_\mathrm{avg}(\mathbf{0},\mathbf{0})
	>C^{1/2}_\mathrm{max}(\mathbf{0},\mathbf{0}).
\end{equation*}
Thus it is possible to choose 
$\mathbf{C}_\mathrm{in}=(C^1_\mathrm{in},C^2_\mathrm{in})\in\mathbb{R}^2_{>0}$
such that 
\begin{equation*}
  C^{1/2}_\mathrm{avg}(\mathbf{0},\mathbf{0})
	-C^{1/2}_\mathrm{max}(\mathbf{0},\mathbf{0})>
	\frac{C_\mathrm{in}^1+C_\mathrm{in}^2}{2}.
\end{equation*}
For every $\mathbf{C}_\mathrm{out}\in\mathbb{R}^2_{\geq 0}$, we have
\begin{align*}
  C^{1/2}_\mathrm{max}(\mathbf{C}_\mathrm{in},\mathbf{C}_\mathrm{out})
  &\overset{(*)}{\leq} C^{1/2}_\mathrm{max}(\mathbf{0},\mathbf{0})
  + \frac{C_\mathrm{in}^1+C_\mathrm{in}^2}{2}\\
  &<C^{1/2}_\mathrm{avg}(\mathbf{0},\mathbf{0})\\
  &\leq C^{1/2}_\mathrm{avg}(\mathbf{C}_\mathrm{in},\mathbf{C}_\mathrm{out}),
\end{align*}
where $(*)$ follows from arguments similar to those that appear in 
the proof of Lemma \ref{prop:conf}. This completes the proof. 

\subsection{Proposition \ref{prop:discontinuity} 
(Discontinuity of \texorpdfstring{$C^\alpha$}{Calpha} under CF model)} 
\label{subsec:discontinuity}
Choose $\lambda\in (0,1)$ such that 
\begin{equation*}
  \min\{C_\mathrm{in}^1,C_\mathrm{in}^2\}
  >\lambda \max_{p(x_1,x_2)} I(X_1,X_2;Y),
\end{equation*}
and define $\mathbf{C}_\mathrm{in}^*=(C_\mathrm{in}^{*1},C_\mathrm{in}^{*2})$,
where $C_\mathrm{in}^{*i}=C_\mathrm{in}^i/\lambda$ for $i\in\{1,2\}$. 
Then
\begin{align*}
  \lim_{C_\mathrm{out}\rightarrow 0^+}
	C_\mathrm{max}^\alpha(\mathbf{C}_\mathrm{in}^*,(C_\mathrm{out},C_\mathrm{out}))
	&=\lim_{C_\mathrm{out}\rightarrow 0^+}
	C_\mathrm{avg}^\alpha(\mathbf{C}_\mathrm{in}^*,(C_\mathrm{out},C_\mathrm{out}))\\&\geq
	C_\mathrm{avg}^\alpha(\mathbf{0},\mathbf{0})>
	C_\mathrm{max}^\alpha(\mathbf{0},\mathbf{0}),
\end{align*}
where the equality follows by Theorem \ref{thm:MaxEqAvg}. This shows 
$C_\mathrm{max}^\alpha(\mathbf{C}_\mathrm{in}^*,\mathbf{C}_\mathrm{out})$ is not 
continuous. Now from Theorem \ref{lem:concavity}, it follows that
\begin{align*}
  C_\mathrm{max}^\alpha(\mathbf{C}_\mathrm{in},\mathbf{C}_\mathrm{out}) &\geq
  \lambda C_\mathrm{max}^\alpha(\mathbf{C}_\mathrm{in}^*,\mathbf{C}_\mathrm{out})
  + (1-\lambda)C_\mathrm{max}^\alpha(\mathbf{0},\mathbf{C}_\mathrm{out})\\&=
  \lambda C_\mathrm{max}^\alpha(\mathbf{C}_\mathrm{in}^*,\mathbf{C}_\mathrm{out})
  + (1-\lambda)C_\mathrm{max}^\alpha(\mathbf{0},\mathbf{0}),
\end{align*}
which can be rearranged as
\begin{equation*}
  C_\mathrm{max}^\alpha(\mathbf{C}_\mathrm{in},\mathbf{C}_\mathrm{out})-C_\mathrm{max}^\alpha(\mathbf{0},\mathbf{0})\geq
  \lambda \big(C_\mathrm{max}^\alpha(\mathbf{C}_\mathrm{in}^*,\mathbf{C}_\mathrm{out})
  -C_\mathrm{max}^\alpha(\mathbf{0},\mathbf{0})\big).
\end{equation*}
Since $\lambda>0$, the discontinuity of $C_\mathrm{max}^\alpha(\mathbf{C}_\mathrm{in}^*,\mathbf{C}_\mathrm{out})$
implies the discontinuity of $C_\mathrm{max}^\alpha(\mathbf{C}_\mathrm{in},\mathbf{C}_\mathrm{out})$.

\subsection{Corollary \ref{cor:discontinuity} (Dueck's Contraction MAC)} 
\label{subsec:contractionMAC}

Dueck's introduction of the ``Contraction MAC'' in \cite{Dueck} proves
the existence of multiterminal networks where the maximal-error capacity region
is a strict subset of the average-error capacity region. 
The input and output alphabets of the contraction
MAC are given by
\begin{align*}  
  \mathcal{X}_1 &= \{A,B,a,b\}\\
	\mathcal{X}_2 &= \{0,1\}\\
	\mathcal{Y} &= \{A,B,C,a,b,c\}\times\{0,1\}.
\end{align*}
The channel is deterministic and defined by the function 
$f:\mathcal{X}_1\times\mathcal{X}_2\rightarrow\mathcal{Y}$, where
\begin{align*}
  f(a,0) &= f(b,0) = (c,0)\\
	f(A,1) &= f(B,1) = (C,1),
\end{align*}
and $f(x_1,x_2)=(x_1,x_2)$ for all other $(x_1,x_2)$.
Dueck \cite{Dueck} shows that the maximal-error capacity region of this channel is contained
in the set of all rate pairs $(R_1,R_2)$ that satisfy
\begin{align*}
  R_1 &\leq \log 3-p\\
	R_2 &\leq h(p)
\end{align*}
for some $0\leq p\leq 1/2$, where $h(p)$ denotes the binary entropy function.
Thus for every $\alpha\in [0,1]$,
\begin{align*}
  C^\alpha_\mathrm{max}(\mathbf{0},\mathbf{0})
	&\leq \max_{p\in [0,1/2]}
	\Big[\alpha(\log 3-p)+(1-\alpha)h(p)\Big]\\
	&= \alpha(\log 3-1)+(1-\alpha)\log\big(1+2^\frac{\alpha}{1-\alpha}\big),
\end{align*}
where the maximum is achieved by 
\begin{equation*}
  p^* = \frac{1}{1+2^{\frac{\alpha}{1-\alpha}}}.
\end{equation*}

We next provide a lower bound for $C^\alpha_\mathrm{avg}(\mathbf{0})$ for the 
contraction MAC. From the average-error capacity region of the MAC \cite{Ahlswede1,Ahlswede2,Liao},
it follows that for $\alpha\in [0,1/2]$,
\begin{equation*}
  C^\alpha_\mathrm{avg}(\mathbf{0},\mathbf{0})=
  \max_{p(x_1)p(x_2)}\Big(
  \alpha I(X_1;Y)+(1-\alpha)I(X_2;Y|X_1)\Big)
\end{equation*}
and for $\alpha\in [1/2,1]$, 
\begin{equation*}
  C^\alpha_\mathrm{avg}(\mathbf{0},\mathbf{0})=
	\max_{p(x_1)p(x_2)}\Big(
	\alpha I(X_1;Y|X_2)+(1-\alpha)I(X_2;Y)\Big).
\end{equation*}
Since the contraction MAC is deterministic, the above equations 
simplify to 
\begin{equation} \label{eq:ZeroOneHalf}
  C^\alpha_\mathrm{avg}(\mathbf{0},\mathbf{0})=
	\max_{p(x_1)p(x_2)}\Big(
	\alpha H(Y)+(1-2\alpha)H(Y|X_1)\Big)
\end{equation}
and 
\begin{equation} \label{eq:OneHalfZero}
  C^\alpha_\mathrm{avg}(\mathbf{0},\mathbf{0})=
	\max_{p(x_1)p(x_2)}\Big(
	(1-\alpha) H(Y)+(2\alpha-1)H(Y|X_2)\Big)
\end{equation}
for $\alpha\in [0,1/2]$ and $\alpha\in [1/2,1]$, respectively. 
Let the input distribution of the first transmitter be given by
\begin{equation*}
  p_{X_1}(A)=p_A,p_{X_1}(B)=p_B,p_{X_1}(a)=p_a,p_{X_1}(b)=p_b,
\end{equation*}
and the input distribution of the second transmitter
be given by $p_{X_2}(1)=q$ and $p_{X_2}(0)=1-q$. In addition, let
$Y_1$ and $Y_2$ denote the components of $Y$ so that $Y=(Y_1,Y_2)$.
Note that $Y_2=X_2$. We have
\begin{align*}
  H(Y) &= H(Y_1,Y_2)\\
	&= H(Y_2)+H(Y_1|Y_2)\\
	&= h(q)+qH(p_a,p_b,p_A+p_B)
	+(1-q)H(p_A,p_B,p_a+p_b),
\end{align*}
where $h(q)$ denotes the binary entropy function
\begin{equation*}
  h(q)=q\log\frac{1}{q}+(1-q)\log\frac{1}{1-q}.
\end{equation*}
Furthermore,
\begin{align*}
  H(Y|X_1) &= H(Y_1,Y_2|X_1)\\
	&= H(Y_2|X_1)= h(q),
\end{align*}
and
\begin{align*}
  H(Y|X_2) &= H(Y_1,Y_2|X_2)\\
	&= H(Y_1|X_2)\\
	&= H(Y_1,X_2)-H(X_2)\\
	&= H(Y)-h(q). 
\end{align*}
From (\ref{eq:ZeroOneHalf})
and (\ref{eq:OneHalfZero}) it follows for all $\alpha\in [0,1]$,
\begin{align*}
  C_\mathrm{avg}^\alpha(\mathbf{0},\mathbf{0}) &\geq \alpha H(Y)+(1-2\alpha)H(q)\\
	&= (1-\alpha)h(q)+\alpha\big[ qH(p_a,p_b,p_A+p_B)
	+(1-q)H(p_A,p_B,p_a+p_b)\big].
\end{align*}
If we set $q=p^*$, $p_A=p_B=1/3$, 
and $p_a=p_b=1/6$, we get 
\begin{equation*}
  C_\mathrm{avg}^\alpha(\mathbf{0},\mathbf{0}) \geq 
	(1-\alpha)h(p^*)+\alpha(\log 3-p^*/3).
\end{equation*}
Recall that 
\begin{equation*}
  C_\mathrm{max}^\alpha(\mathbf{0},\mathbf{0}) \leq 
	(1-\alpha)h(p^*)+\alpha(\log 3-p^*).
\end{equation*}
Thus $C_\mathrm{avg}^\alpha(\mathbf{0},\mathbf{0}) >C_\mathrm{max}^\alpha(\mathbf{0},\mathbf{0})$, unless
$\alpha=0$ or $p^*=0$ (which occurs if and only if $\alpha=1$). 

\subsection{Proposition \ref{prop:confReliability} (Reliability benefit of conferencing)}
\label{subsec:confReliability}

Our proof is similar to the proof of Theorem \ref{thm:main}.
However, using results from Willems \cite{WillemsMAC}, we get
a stronger result than the one obtained by direct application 
of Theorem \ref{thm:main}.

For $r_1,r_2\geq 0$ and $C_{12},C_{21}\geq 0$, let 
\begin{equation*}
  \mathscr{C}_{\mathrm{conf},(r_1,r_2)}(C_{12},C_{21})
\end{equation*}
denote the $(r_1,r_2)$-error capacity region of a MAC with 
$(C_{12},C_{21})$-conferencing. Here we show that if 
$C_{12},C_{21}\geq 0$, then
\begin{equation} \label{eq:C12C21Rel}
  \mathscr{C}_{\mathrm{conf,avg}}(C_{12},C_{21}) \subseteq
  \mathscr{C}_{\mathrm{conf},(C_{12},C_{21})}(C_{12},C_{21}).
\end{equation}
Note that inclusion in the reverse direction, that is,
\begin{equation*}
  \mathscr{C}_{\mathrm{conf,avg}}(C_{12},C_{21}) \supseteq
  \mathscr{C}_{\mathrm{conf},(C_{12},C_{21})}(C_{12},C_{21}).
\end{equation*}
follows from definition; thus (\ref{eq:C12C21Rel})
is all that we require to prove equality. 

We now prove (\ref{eq:C12C21Rel}). For every blocklength $n$ and 
every pair of positive integers $(L_1,L_2)$, consider message sets
of the form 
\begin{equation*}
  \mathcal{M}_i=[K_i]\times [L_i]
	\text{ for }i\in\{1,2\}, 
\end{equation*}
where $K_1 = 2\lfloor 2^{nC_{12}}\rfloor$ and
$K_2 = 2\lfloor 2^{nC_{21}}\rfloor$.
We know from Willems \cite{WillemsMAC}, that a single conferencing round
achieves any rate pair in the average-error capacity region of the MAC 
with $(C_{12},C_{21})$-conferencing. Furthermore, in that single round it 
suffices for encoder 1 to send the first $nC_{12}$ bits of its message to
encoder 2 and for encoder 2 to send the first $nC_{21}$ bits of its message 
to encoder 1.  Thus if $(R_1,R_2)$ is a rate pair in the average-error capacity region,
then for all $\epsilon,\delta>0$ and all sufficently large $n$, there exist 
encoding functions of the form 
\begin{equation} \label{eq:confEnc}
  f_i:[K_1]\times [K_2]\times [L_i]\rightarrow \mathcal{X}_i^n
	\text{ for }i\in\{1,2\},
\end{equation}
and a decoder of the form 
$g:\mathcal{Y}^n\rightarrow \mathcal{M}_1\times \mathcal{M}_2$, with
\begin{equation*}
  \frac{1}{n}\log K_iL_i>R_i-\delta
	\text{ for }i\in\{1,2\},
\end{equation*}  
and average probability of error given by
\begin{equation*}
  P_{e,\mathrm{avg}}^{(n)}=\frac{1}{K_1K_2L_1L_2}\sum_{k_1,k_2}\sum_{\ell_1,\ell_2}
  \lambda_n((k_1,\ell_1),(k_2,\ell_2))\leq \epsilon.
\end{equation*}
Let $S$ be a subset of $[K_1]\times [K_2]$ with cardinality 
$|S|= K_1K_2/4$ containing the
$(k_1,k_2)$ pairs with the smallest values of 
\begin{equation*}
  \frac{1}{L_1L_2}\sum_{\ell_1,\ell_2}
  \lambda_n((k_1,\ell_1),(k_2,\ell_2)).
\end{equation*}
For $i\in\{1,2\}$, let $K'_i= K_i/2$.
Since $K'_1K'_2\leq |S|$, there exists an injective function
$\varphi:[K_1']\times [K_2']\rightarrow S$. Now consider the 
code defined by the encoders $(f'_1,f'_2)$, where for $i\in\{1,2\}$,
\begin{equation*}
  f'_i:[K_1']\times [K_2']\times [L_i]
  \rightarrow\mathcal{X}^n
\end{equation*}
maps $(k'_1,k'_2,\ell_i)$ to $f_i(\varphi(k'_1,k'_2),\ell_i)$,
and a decoder $g':\mathcal{Y}^n\rightarrow [K'_1]\times [K'_2]\times [L_1]\times [L_2]$
defined as
\begin{equation*}
  g'(y^n)=
  \begin{cases}
	\big(\varphi^{-1}(\hat{k}_1,\hat{k}_2),\hat{\ell}_1,\hat{\ell}_2\big)
	&\text{if }(\hat{k}_1,\hat{k}_2)\in\mathrm{range}(\varphi)\\
	(1,1,\hat{\ell}_1,\hat{\ell}_2)&\text{otherwise},
	\end{cases}
\end{equation*}
where $(\hat{k}_1,\hat{k}_2,\hat{\ell}_1,\hat{\ell}_2):=g(y^n)$.

Note that when the pair $\big((k'_1,\ell_1),(k'_2,\ell_2)\big)$
is transmitted using the new code, the probability of error 
equals $\lambda_n((k_1,\ell_1),(k_2,\ell_2))$, where 
\begin{equation*}
  (k_1,k_2):=\varphi(k'_1,k'_2).
\end{equation*}
Thus
\begin{equation*}
  P_e^{(n)}(C_{12},C_{21})\leq \max_{(k_1,k_2)\in S}\frac{1}{L_1,L_2}
  \sum_{\ell_1,\ell_2}\lambda_n((k_1,\ell_1),(k_2,\ell_2))\leq 
	\frac{4\epsilon}{3}.
\end{equation*} 
In addition, for $i\in\{1,2\}$ and sufficiently large $n$,
\begin{equation*}
  \frac{1}{n}\log K'_iL_i
  =\frac{1}{n}\log K_iL_i-\frac{1}{n}> R_i-2\delta. 
\end{equation*}
Thus $(R_1,R_2)$ is in the $(C_{12},C_{21})$-error capacity region. This completes the proof. 

\subsection{Proposition \ref{prop:conf} 
(Continuity of \texorpdfstring{$C^\alpha_\mathrm{conf}$}{Cconf})} 
\label{subsec:conf}
First note that the functions 
$\mathbf{C}_\mathrm{in}:\mathbb{R}^2_{\geq 0}\rightarrow \mathbb{R}^2_{\geq 0}$
and $\mathbf{C}_\mathrm{out}:\mathbb{R}^2_{\geq 0}\rightarrow \mathbb{R}^2_{\geq 0}$
defined by 
\begin{align*}
  \mathbf{C}_\mathrm{in}(C_{12},C_{21})
	&=(C_{12},C_{21})\\
  \mathbf{C}_\mathrm{out}(C_{12},C_{21})
	&=(C_{21},C_{12})
\end{align*}
are continuous. Thus from the definition of $C_\mathrm{conf}^\alpha$, 
given by (\ref{eq:CF2Conf}), and Theorem \ref{lem:concavity}
it follows that for every $\alpha\in [0,1]$,
$C^\alpha_\mathrm{conf}$ is continuous on $\mathbb{R}^2_{>0}$.
We next deal with the specfic results regarding $C^\alpha_\mathrm{conf,avg}$
and $C^\alpha_\mathrm{conf,max}$.

From \cite{WillemsMAC}, we know that the average-error capacity region
of the MAC with $(C_{12},C_{21})$-conferencing is given by the closure
of the set of all rate pairs $(R_1,R_2)$ that satisfy
\begin{align*}
  (R_1-C_{12})^+ &< I(X_1;Y|U,X_2)\\
  (R_2-C_{21})^+ &< I(X_2;Y|U,X_1)\\
  (R_1-C_{12})^++(R_2-C_{21})^+ &< I(X_1,X_2;Y|U)\\
  R_1+R_2 &< I(X_1,X_2;Y)
\end{align*}
for some distribution $p(u)p(x_1|u)p(x_2|u)$. Thus
whenever the rate pair $(R_1,R_2)$ is in the average-error capacity
region of a MAC with $(C_{12},C_{21})$-conferencing, then the rate pairs
\begin{align*} 
  &\big(R_1,(R_2-C_{21})^+\big)\\
	&\big((R_1-C_{12})^+,R_2\big)
\end{align*}
are achievable for the same MAC with $(C_{12},0)$- and $(0,C_{21})$-conferencing,
respectively.  From this result it follows that for every $\alpha\in [0,1]$,
\begin{align}
  C^\alpha_\mathrm{conf,avg}(C_{12},C_{21})
	&\leq C^\alpha_\mathrm{conf,avg}(C_{12},0)+(1-\alpha)C_{21}\label{eq:two2zero}\\
	C^\alpha_\mathrm{conf,avg}(C_{12},C_{21})
	&\leq C^\alpha_\mathrm{conf,avg}(0,C_{21})+\alpha C_{12}.\label{eq:one2zero}
\end{align}
Since $C^\alpha_\mathrm{conf,avg}(0,0)\leq C^\alpha_\mathrm{conf,avg}(C_{12},0)$,
if we now set $C_{21}=0$ in (\ref{eq:one2zero}), we get 
\begin{equation*}
  C^\alpha_\mathrm{conf,avg}(0,0)\leq
  C^\alpha_\mathrm{conf,avg}(C_{12},0)
  \leq C^\alpha_\mathrm{conf,avg}(0,0)+\alpha C_{12}.
\end{equation*}
Thus $C^\alpha_\mathrm{conf,avg}(C_{12},0)$ is continuous on $\mathbb{R}_{\geq 0}$.
Similarly, we can show that $C^\alpha_\mathrm{conf,avg}$
is continuous on $\mathbb{R}_{\geq 0}\times \{0\}$, by combining 
$C^\alpha_\mathrm{conf,avg}(C_{12},C_{21})
\geq C^\alpha_\mathrm{conf,avg}(C_{12},0)$ with
(\ref{eq:two2zero}).
The continuity on $\{0\}\times \mathbb{R}_{\geq 0}$
follows similarly, and thus $C^\alpha_\mathrm{conf,avg}$ 
is continuous on $\mathbb{R}^2_{\geq 0}$. 

We next prove $C_\mathrm{conf,max}^{1/2}$, viewed as a function over $\mathbb{R}^2_{\geq 0}$,
is continuous at $(0,0)$. Note that for every $(n,M_1,M_2,J)$-code for the MAC with 
conferencing, the set of all messages that lead to the same conferencing 
output is of the form $\mathcal{A}_1\times \mathcal{A}_2$ for some $\mathcal{A}_i\subseteq [M_i]$
for $i\in \{1,2\}$. This follows directly from Equation (19) in \cite{WillemsMAC}.
Now fix a sequence of $(n,M_1,M_2,J)$-codes that achieve the rate pair $(R_1^*,R_2^*)$,
where
\begin{equation*}
  R_1^*+R_2^*
	=2C^{1/2}_\mathrm{max}(C_{12},C_{21}).
\end{equation*}
Since there are at most $2^{n(C_{12}+C_{21})}$ possible conferencing outputs, the
pigeonhole principle implies that for each $i\in\{1,2\}$, there exists a set 
$\mathcal{A}_i^*\subseteq [M_i]$ such that
\begin{equation*}
  |\mathcal{A}_1^*|\times |\mathcal{A}_2^*|
  \geq M_1 M_2 2^{-n(C_{12}+C_{21})},
\end{equation*}
and the set of all message pairs in $\mathcal{A}_1^*\times\mathcal{A}_2^*$
lead to the same conferencing output. 
Since for $i\in\{1,2\}$ and sufficiently large $n$, 
$\frac{1}{n}\log M_i\geq R_i^*-\delta$, we get
\begin{align*}
  \frac{1}{n}\log\big|\mathcal{A}_1^*\big|\big|\mathcal{A}_2^*\big|
  &\geq R_1^*+R_2^*-C_{12}-C_{21}-2\delta\\
  &= 2C^{1/2}_\mathrm{max}(C_{12},C_{21})-C_{12}-C_{21}-2\delta.
\end{align*}
Now consider the code where for $i\in\{1,2\}$, encoder $i$ transmits codewords from the 
original code that correspond to messages in $\mathcal{A}^*_i$. Then this code has 
small maximal error since the maximum probability of error over the message pairs 
in $\mathcal{A}_1^*\times\mathcal{A}_2^*$ is at most as large as the maximal probability
of error of the original code. Thus 
\begin{equation*}
  C^{1/2}_\mathrm{max}(C_{12},C_{21})\leq
  C^{1/2}_\mathrm{max}(0,0)+\frac{C_{12}+C_{21}}{2}.
\end{equation*}
Combining this inequality with 
\begin{equation*}
  C^{1/2}_\mathrm{max}(C_{12},C_{21})
	\geq C^{1/2}_\mathrm{max}(0,0),
\end{equation*}
implies $C^{1/2}_\mathrm{max}$ is continuous at $(0,0)$.

\section{Conclusion}

Cooperation is a powerful tool in communication networks. In addition to 
increasing transmission rates, cooperation makes communication
more reliable. Specifically, Theorem \ref{thm:main} and Proposition \ref{prop:confReliability} 
quantify the relationship between the reliability of a network and cooperation rate
under the CF and conferencing models, respectively. Theorem \ref{thm:MaxEqAvg}
states that in the CF model, when the facilitator has full
access to the messages, the maximal- and average-error capacity regions of the network
are identical, no matter how small the output link capacities of the CF are.
This result continues to hold even when the CF output links are of negligible capacity,
thus providing a positive answer to the question posed in the title of the paper. 
Finally, Proposition \ref{prop:discontinuity}
demonstrates the existence of a network whose maximal-error sum-capacity is not 
continuous with respect to the capacities of some of its edges. 
The same question, with average-error replacing maximal-error, remains open. 

\bibliographystyle{IEEEtran}
\bibliography{ref}{}

\end{document}